\documentclass[aps,twocolumn,superscriptaddress,footinbib]{revtex4}
\usepackage{epstopdf}
\usepackage{graphicx}  
\usepackage{bm}        
\usepackage{braket}
\usepackage{exscale}                  
\usepackage[intlimits]{amsmath}       
\usepackage{amsfonts}
\usepackage{amssymb,amscd}
\usepackage{color}
\usepackage{hyperref}
\usepackage{subfigure}
\usepackage[normalem]{ulem}
\usepackage{slashed}
\usepackage{leftidx}
\usepackage{xcolor,soul}
\usepackage{txfonts}
\usepackage{charter}
\usepackage{dsfont}
\usepackage{mathrsfs}

\newtheorem{lemma}{Lemma}

\newtheorem{proposition}{Proposition}
\newtheorem{definition}{Definition}

\def\Tr{\operatorname{tr}}
\usepackage{times}

\hypersetup{colorlinks=true, citecolor=blue, urlcolor=blue, linkcolor=blue}
\begin{document}
\title{Geometric Power Capacity of Coherent Ergotropy in Quantum Batteries}

\author{Dong-Ping Xuan\footnote{E-mail: xuandongping@ccut.edu.cn}}
\affiliation{School of Mathematics and Statistics, Changchun University of Technology, Changchun 130012, Jilin, China}
\author{Hua Nan\footnote{E-mail: nanhua@ybu.edu.cn}}
\affiliation{Department of Mathematics, College of Sciences, Yanbian University, Yanji 133002, China}
\author{Zhi-Xi Wang\footnote{E-mail: wangzhx@cnu.edu.cn}}
\affiliation{School of Mathematical Sciences, Capital Normal University, Beijing 100048, China}
\author{Shao-Ming Fei\footnote{E-mail: feishm@cnu.edu.cn}}
\affiliation{School of Mathematical Sciences, Capital Normal University, Beijing 100048, China}
\affiliation{Max-Planck-Institute for Mathematics in the Sciences, 04103 Leipzig, Germany}
\begin{abstract}
\textbf{We explore coherent ergotropy extraction in quantum batteries from a resource-geometric point of view. 
For an initial state $\rho$, we quantify the coherent extraction process by the coherent ergotropy 
$\mathcal{E}_c(\rho)$ and the coherent extraction distance $D_c^{\rm ext}(\rho)$ between the active state $\sigma_\rho$ and the passive state $P_\rho$. This defines the geometric power capacity $\Pi_c(\rho)=\mathcal{E}_c(\rho)/D_c^{\rm ext}(\rho)$, which measures the coherent ergotropy 
released  unit minimal unitary distance. We prove that, for any driving Hamiltonian satisfying 
$\|V_t\|\leq\nu$, the actual coherent discharging power is bounded by $P_c^{\rm ext}(\rho;V_t)\leq \nu\Pi_c(\rho)$,
showing that $\Pi_c(\rho)$ is a capacity under unit driving norm rather than the power of a particular protocol.
General bounds on $\Pi_c(\rho)$ are derived by combining relative entropy bounds on coherent 
ergotropy with geometric bounds on the coherent extraction distance. 
We also formulate coherence measure induced bounds and protocol-corrected capacities 
involving the effective speed of a given Hamiltonian. Qubit and qutrit examples demonstrate 
that $\Pi_c(\rho)$ captures a resource-geometric feature of coherent discharging beyond coherent 
ergotropy or coherence measures alone.}
\end{abstract}
\maketitle

\vspace{.4cm}

\section{Introduction}\label{sect1}

Quantum batteries were first introduced by Alicki and Fannes as quantum systems designed for storing and releasing energy at the microscopic scale~\cite{Alicki2013PRE}. In contrast to classical batteries, quantum batteries can exploit genuinely quantum features, such as coherence, entanglement, collective effects, and nonclassical correlations, which, under suitable physical and dynamical conditions, may improve their charging power, energy-storage capability, and stability~\cite{Andolina2019PRB,Andolina2019PRL}. However, the presence or amount of entanglement alone does not guarantee a charging-power advantage. In particular, the quantum-state contribution to charging power is not generally an entanglement monotone, and entanglement may be either beneficial or detrimental depending on the battery state and the charging protocol~\cite{GyhmFischer2024}.
Over the past decades, quantum batteries have attracted considerable attention, and a variety of theoretical models and protocols have been proposed to clarify when and through which mechanisms quantum resources
influence their performance. In particular, many studies have investigated optimal charging and
discharging processes aimed at improving the power performance of quantum batteries
~\cite{Ferraro2018,Binder2015,Le2018,Andolina2018,Ghosh2020,GarciaPintos2020}. Other studies have focused on work storage, energy extraction, and the stability of stored energy under realistic dynamical conditions
~\cite{Crescente2020,Gherardini2020,Santos2019,Quach2020}. More recently, topological photonic waveguides and their associated bound states have been employed to achieve near-perfect energy transfer, protect ergotropy against dissipation under suitable configurations, and temporarily enhance charging power through the quantum Zeno effect~\cite{Lu2025PRL}. These developments show that the performance of a quantum battery is not determined solely by the amount of energy stored in the system, but also by the physical mechanisms through which energy is charged, protected, and extracted.


The energetic content of a quantum battery is commonly characterized by its ergotropy, which quantifies the maximum amount of work that can be extracted from a quantum state by means of cyclic unitary operations~\cite{Alicki2013PRE,ergotropyAllahverdyan2004Europhys}. For a quantum system with Hamiltonian \(H\) and state \(\rho\) acting on a \(d\)-dimensional Hilbert space, the ergotropy is defined by 
$$
\mathcal{E}(\rho)=\operatorname{Tr}(\rho H)-\min_{U}\operatorname{Tr}(U\rho U^\dagger H),
$$
where the minimization is taken over all unitary transformations.
Equivalently, ergotropy measures the energy difference between the initial state and its passive state, namely, the state from which no further work can be extracted under unitary operations~\cite{Ferraro2018,Andolina2018,Ghosh2020,GarciaPintos2020,Farina2019,Barra2019PRL}.
As such, ergotropy provides a natural figure of merit for the charge level of a quantum battery and has been widely used to analyze work storage, work extraction, charging advantages and discharging protocols in quantum thermodynamic systems
~\cite{Barra2022NJP,SongWLPRL,Campaioli2024RMP,Monsel2020PRL,Tirone2021PRL,GelbwaserKlimovsky2013,Niedenzu2016NJP,Seah2018NJP,Niedenzu2019Quantum,Seah2021PRL}. 

Physically, it is important to identify which part of ergotropy originates from genuinely quantum coherence. In finite-time thermodynamic transformations, controlled operations generally generate coherence in the instantaneous energy eigenbasis of the driven system, and such coherence can act as a nonequilibrium resource for work extraction. Francica \textit{et al.} introduced the notion of coherent ergotropy to isolate the coherent contribution to the total work yield~\cite{coherentergotropyPRL,BPRE2020}. The optimal work-extraction process can be divided into an incoherent operation and a coherence-extraction cycle, allowing one to distinguish the incoherent and coherent parts of the extractable work. In this sense, coherent ergotropy quantifies the maximum amount of work that can be attributed specifically to quantum coherence, rather than to population imbalance alone. 
This concept provides a useful tool for assessing the contribution of energy-basis coherence to work extraction and coherent discharging performance, without assuming that coherence alone universally determines the attainable power.

A complementary perspective has recently been provided by the concept of quantum charging distance~\cite{Ju-Yeon2024PRA}. Rather than focusing only on the amount of energy or work stored in a quantum battery, this approach characterizes the minimal unitary distance, or equivalently the minimal time under bounded driving resources, required to transform one battery state into another. For pure states, this distance reduces to the Bures angle, whereas for mixed states it generally involves an optimization over admissible unitary transformations and admits computable geometric bounds. This shows that the charging or discharging performance of a quantum battery is constrained not only by the amount of extractable work, but also by the geometry of the state-space path through which such work is accessed.

The above observations naturally motivate the question addressed in this work. Coherent ergotropy quantifies how much extractable work originates from quantum coherence, while quantum charging distance quantifies the geometric cost of implementing a unitary state transformation. However, a state may contain a large amount of coherent ergotropy but require a long coherent extraction path, whereas another state may contain less coherent ergotropy but release it along a shorter path. We therefore combine these two viewpoints and introduce a geometric characterization of coherent ergotropy extraction. In particular, we define the geometric power capacity of coherent ergotropy as the coherent ergotropy extractable in unit minimal coherent extraction distance.

The paper is organized as follows. Sec.~\ref{sect2} reviews the necessary preliminaries.
Sec.~\ref{sect3} introduces the coherent ergotropy extraction distance and the geometric power capacity, with analytical examples.
Sec.~\ref{sect4} derives general bounds on this capacity and discusses bounds induced by different coherence measures.
Sec.~\ref{sec:protocol_corrected_capacity} connects the intrinsic capacity $\Pi_c(\rho)$ with concrete driving protocols through a protocol-corrected formulation.
Sec.~\ref{sec:conclusion} summarizes the results.

\section{Preliminaries}\label{sect2}
We first recall some basic notions used throughout this work, including quantum batteries, ergotropy, coherent ergotropy and the quantum charging distance. 

\subsection{Quantum battery and ergotropy}
A quantum battery is modeled as a quantum system whose internal energy is determined by a fixed Hamiltonian,
\begin{equation*}
    H_0=\sum_{k=1}^{d}\epsilon_k
    \ket{\epsilon_k}\bra{\epsilon_k},
    \qquad
    \epsilon_1\leq \epsilon_2\leq \cdots \leq \epsilon_d ,
    \label{eq:H0}
\end{equation*}
where $\{\ket{\epsilon_k}\}_{k=1}^{d}$ is the energy eigenbasis. 
The average energy of a battery state $\rho$ is given by $E(\rho)=\Tr[H_0\rho]$.

Work can be extracted from the battery by a cyclic unitary transformation $U$, generated by a time-dependent Hamiltonian whose initial and final values coincide with $H_0$. Under such a process, the state evolves as
$\rho \rightarrow U\rho U^\dagger$, and the average work extracted from the system is 
$$ 
W(\rho,U) =\Tr\!\left[H_0(\rho-U\rho U^\dagger)\right].
$$
The maximum work extractable from $\rho$ by cyclic unitary transformations is called the ergotropy~\cite{Alicki2013PRE,ergotropyAllahverdyan2004Europhys}, 
$$
\mathcal{E}(\rho) =\max_U W(\rho,U).\label{eq:ergotropy_def}
$$

Let the spectral decomposition of $\rho$ be
\begin{equation}
    \rho=\sum_{k=1}^{d} r_k \ket{r_k}\bra{r_k},
    \qquad
    r_1\geq r_2\geq \cdots \geq r_d .
    \label{eq:rho_spectral}
\end{equation}
The passive state associated with $\rho$ is defined as
\begin{equation}
    P_\rho
    =\sum_{k=1}^{d} r_k
    \ket{\epsilon_k}\bra{\epsilon_k}.
    \label{eq:passive_state}
\end{equation}
That is, $P_\rho$ is obtained by placing the eigenvalues of $\rho$ in decreasing order on the energy eigenstates in increasing order of energy. This state is passive in the sense that no work can be further extracted from it by any cyclic unitary operation. Therefore, the ergotropy can be written as
\begin{equation}
\begin{aligned}
\mathcal{E}(\rho) &= \max_{U} W(\rho, U)  = \operatorname{Tr}\{H(\rho - P_\rho)\} \\
&\equiv \sum_k \varepsilon_k (\rho_{kk} - r_k),
\end{aligned}
 \label{eq:ergotropy_passive}
\end{equation}
where  $\rho_{kk} = \sum_{k'} r_{k'} |\langle r_{k'} | \varepsilon_k \rangle|^2$ denotes the population of $\rho$ in the $k$th energy eigenstate.

Thus, ergotropy measures the useful part of the internal energy of a quantum battery, namely, the amount of energy that can be converted into work through cyclic unitary dynamics.
\subsection{Coherent ergotropy}
We now recall the decomposition of ergotropy into incoherent and coherent contributions~\cite{coherentergotropyPRL,BPRE2020}.
The coherence is always understood with respect to the energy eigenbasis
\(\{\ket{\epsilon_k}\}\) of \(H_0\).
The full ergotropy can be decomposed as
\begin{equation}
    \mathcal{E}(\rho)
    =
    \mathcal{E}_i(\rho)
    +
    \mathcal{E}_c(\rho),
    \label{eq:ergotropy_decomposition}
\end{equation}
where \(\mathcal{E}_i(\rho)\) is the incoherent ergotropy and
\(\mathcal{E}_c(\rho)\) is the coherent ergotropy.

The incoherent ergotropy is the maximum amount of work that can be extracted without consuming the coherence of \(\rho\), namely, under incoherent cyclic unitaries which only reshuffle the energy eigenbasis up to phase factors, $ V_\pi =\sum_{k=1}^{d} e^{-i\phi_k} \ket{\epsilon_k}\bra{\epsilon_{\pi_k}}$,
where \(\pi\) is a permutation of the indices.
Such unitaries preserve the amount of coherence of the state in the energy eigenbasis.

Let \(\tilde{\pi}\) be the permutation that rearranges the energy populations of \(\rho\) in decreasing order, namely, 
$ \rho_{\tilde{\pi}_j\tilde{\pi}_j}\geq\rho_{\tilde{\pi}_{j+1}\tilde{\pi}_{j+1}},\quad j=1,\ldots,d-1 $.
Let
$\sigma_{\rho} = V_{\tilde{\pi}} \rho V_{\tilde{\pi}}^{\dagger} = \sum_k \sum_{k'} \rho_{\tilde{\pi}_k, \tilde{\pi}_{k'}} |\varepsilon_k\rangle\langle\varepsilon_{k'}|$. The incoherent contribution to ergotropy is
\begin{equation}
\mathcal{E}_i(\rho) = \operatorname{Tr}\{H(\rho - \sigma_{\rho})\} = \sum_k \varepsilon_k (\rho_{kk} - \rho_{\tilde{\pi}_k \tilde{\pi}_k}).\label{eq:incoherent_ergotropy}
\end{equation} 
The state \(\sigma_\rho\) is obtained from \(\rho\) by an incoherent unitary transformation. Therefore, \(\sigma_\rho\) and \(\rho\) have the same eigenvalues and the same amount of coherence. However, among the states connected to \(\rho\) by incoherent cyclic unitaries, \(\sigma_\rho\) has the lowest average energy.

Equivalently, let \(\Delta\) denote the full dephasing map in the energy eigenbasis of \(H_0\). Then $\delta_\rho=\Delta[\rho]$ has the same energy populations and the same average energy as \(\rho\), but contains no off-diagonal coherence in the energy eigenbasis. Denote by \(P_\delta\) the passive state associated with \(\delta_\rho\), which is obtained by rearranging the populations of \(\delta_\rho\), equivalently the populations \(\{\rho_{kk}\}\) of \(\rho\), in decreasing order and assigning them to the energy eigenstates in increasing order of energy.
The incoherent ergotropy can be equivalently written as the ergotropy of the dephased state,
\begin{equation}
    \mathcal{E}_i(\rho)
    =
    \mathcal{E}(\delta_\rho)
    =
    \Tr\!\left[
    H_0(\delta_\rho-P_\delta)
    \right].
    \label{eq:Ei_dephased}
\end{equation}
Hence, the incoherent ergotropy depends only on the energy populations of \(\rho\), but not on its off-diagonal coherence.

The coherent ergotropy is defined as the remaining part of the full ergotropy after the incoherent contribution has been extracted,
\begin{equation}
    \mathcal{E}_c(\rho)
    =
    \mathcal{E}(\rho)-\mathcal{E}_i(\rho).
    \label{eq:coherent_ergotropy_def}
\end{equation}
Using Eqs.~\eqref{eq:ergotropy_passive} and \eqref{eq:incoherent_ergotropy}, one obtains
\begin{equation}
    \mathcal{E}_c(\rho)
    =
    \Tr\!\left[H_0(\sigma_\rho-P_\rho)\right]\equiv \sum_k \varepsilon_k (\rho_{\tilde{\pi}_k \tilde{\pi}_k} - r_k).
    \label{eq:coherent_ergotropy}
\end{equation}
This equation gives a physical meaning of the coherent ergotropy:
an incoherent unitary extracts \(\mathcal{E}_i(\rho)\) from \(\rho\) and transforms $\rho$ to \(\sigma_\rho\),
then the remaining extractable work, \(\mathcal{E}_c(\rho)\), is obtained by transforming \(\sigma_\rho\) into the passive state \(P_\rho\).

A crucial observation is that \(\sigma_\rho\) and \(P_\rho\) have the same eigenspectrum. Indeed, \(\sigma_\rho\) is unitarily connected to \(\rho\), while \(P_\rho\) is the passive state obtained by rearranging the eigenvalues of \(\rho\).
Therefore, $\operatorname{spec}(\sigma_\rho)= \operatorname{spec}(P_\rho) = \operatorname{spec}(\rho)$.

As a result, the transformation \(\sigma_\rho \rightarrow P_\rho\)
can be realized by a unitary evolution. This transformation is precisely the coherent part of the work extraction process.
If \(\rho\) is diagonal in the energy eigenbasis, then no energetic coherence is present. In this case, $ \rho=\Delta[\rho], \quad \mathcal{E}_c(\rho)=0$.
Therefore, \(\mathcal{E}_c(\rho)\) quantifies the part of extractable work that is genuinely due to the coherence of the initial state.

\subsection{Quantum charging distance}
We next introduce the quantum charging distance~\cite{Ju-Yeon2024PRA}, which measures the minimal time required to transform one state into another by unitary dynamics under a fixed driving resource.

Consider two density matrices $\rho$ and $\sigma$ with the same eigenspectrum. Since unitary evolution preserves the spectrum, there exists a unitary operator $U$ such that
$\sigma=U\rho U^\dagger$. Assume that $ U=\mathcal{T}\exp\!\left( -i\int_0^T V_t\,dt
 \right)$ is generated by a possibly time-dependent driving Hamiltonian $V_t$, where $\mathcal{T}$ denotes time ordering. The quantum charging distance between $\rho$ and $\sigma$ is defined as
\begin{equation}
    D(\rho,\sigma)
    =
    \min_{V_t:\|V_t\|=1} T,
    \label{eq:charging_distance_def}
\end{equation}
where $\|V_t\|$ is the operator norm, i.e., the largest absolute eigenvalue of $V_t$.
The charging distance measures the intrinsic minimal time cost of a unitary transformation when the driving strength is fixed.

The quantity $D(\rho,\sigma)$ is a proper distance on the set of density matrices with the same spectrum. Namely, it satisfies the positivity, $D(\rho,\sigma)\geq 0$ and $D(\rho,\sigma)=0 \Leftrightarrow \rho=\sigma$;
the symmetry, $D(\rho,\sigma)=D(\sigma,\rho)$,
and the triangle inequality,
$D(\rho_1,\rho_2)\leq D(\rho_1,\rho_3)+D(\rho_3,\rho_2)$.

Although Eq.~\eqref{eq:charging_distance_def} is defined by optimizing over time-dependent Hamiltonians, the same distance can be computed by optimizing over unitary operators:
\begin{equation}
    D(\rho,\sigma)
    =
    \min_{U:U\rho U^\dagger=\sigma}
    \|i\ln U\|.
    \label{eq:charging_distance_unitary}
\end{equation}
The unitary attaining this minimum is denoted by $U_{\rm opt}$ and is called the optimal charging unitary. The corresponding optimal driving Hamiltonian is time independent and given by $V_{\rm opt}=\frac{i\ln U_{\rm opt}}{D(\rho,\sigma)}.\label{eq:optimal_driving}$

\section{Geometric power capacity of coherent ergotropy}\label{sect3}
We now apply the quantum charging distance to the coherent part of the work extraction process.
As discussed in the previous section, after extracting the incoherent ergotropy, the state \(\rho\) is transformed into \(\sigma_\rho\) by an incoherent unitary operation.
The remaining extractable work is the coherent ergotropy, $\mathcal{E}_c(\rho) = \Tr\!\left[H_0(\sigma_\rho-P_\rho)\right]$, where \(P_\rho\) is the passive state associated with \(\rho\).

\begin{definition}
\label{definition1}
Let \(\rho\) be a quantum battery state. Let \(\sigma_\rho\) be the active state obtained after the incoherent ergotropy has been extracted, and \(P_\rho\) the passive state associated with \(\rho\), both defined with respect to the same battery Hamiltonian \(H_0\). The coherent ergotropy extraction distance of \(\rho\) is defined as the quantum charging distance between \(\sigma_\rho\) and \(P_\rho\):
\begin{equation}
    D_c^{\rm ext}(\rho)
    =
    D(\sigma_\rho,P_\rho)
    =
    \min_{V_t:\,\|V_t\|=1} T .
    \label{eq:coherent_extraction_distance}
\end{equation}
\end{definition}

\begin{lemma}
\label{lemma1}
For any quantum battery state \(\rho\), the coherent extraction distance $D_c^{\rm ext}(\rho)=D(\sigma_\rho,P_\rho)$
is well defined. Moreover,
\begin{equation*}
    D_c^{\rm ext}(\rho)\geq 0,
    \qquad
    D_c^{\rm ext}(\rho)=0
    \quad\Longleftrightarrow\quad
    \sigma_\rho=P_\rho .
    \label{eq:Dc_zero_condition}
\end{equation*}
\end{lemma}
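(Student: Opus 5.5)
The plan is to reduce the statement to two facts already recorded in Sec.~\ref{sect2}: the isospectrality $\operatorname{spec}(\sigma_\rho)=\operatorname{spec}(P_\rho)=\operatorname{spec}(\rho)$, and the metric properties of the quantum charging distance $D$ on sets of density matrices with a common spectrum.

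\emph{Well-definedness.} First I would check that $\sigma_\rho$ and $P_\rho$ are uniquely determined, up to the degeneracies that do not affect them. The state $\sigma_\rho=V_{\tilde\pi}\rho V_{\tilde\pi}^\dagger$ is unitarily equivalent to $\rho$. The state $P_\rho$ is diagonal in $\{\ket{\epsilon_k}\}$ with entries $r_1\geq\cdots\geq r_d$. Both therefore have spectrum $\{r_k\}$.

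Next I would exhibit an explicit connecting unitary. Writing $\sigma_\rho=\sum_k r_k\ket{s_k}\bra{s_k}$, the unitary $W=\sum_k\ket{\epsilon_k}\bra{s_k}$ satisfies $W\sigma_\rho W^\dagger=P_\rho$. Hence the feasible set $\{U:U\sigma_\rho U^\dagger=P_\rho\}$ in Eq.~\eqref{eq:charging_distance_unitary} is nonempty.

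This feasible set is closed in the compact group $U(d)$, so it is compact. The map $U\mapsto\|i\ln U\|$, with the principal branch taking eigenphases in $(-\pi,\pi]$, equals the largest absolute eigenphase of $U$. This quantity is a lower semicontinuous function of $U$, being continuous except at eigenvalue $-1$, where it attains the value $\pi$. It is also bounded by $\pi$. A lower semicontinuous function attains its minimum on a compact set. So the minimum exists, is finite, and satisfies $D_c^{\rm ext}(\rho)\le\pi$.

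Equivalently, in the time formulation of Eq.~\eqref{eq:charging_distance_def}, the constant Hamiltonian $V=i\ln W/\|i\ln W\|$ reaches $P_\rho$ in finite time $T=\|i\ln W\|$. So the infimum is over a nonempty set, and it is attained by the equivalence with Eq.~\eqref{eq:charging_distance_unitary}. The main subtlety is the branch of the logarithm and attainment of the minimum; I would handle it via the eigenphase description above.

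\emph{Positivity and the zero condition.} Nonnegativity is immediate, since $D$ is either a minimal time or the norm of an operator.

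If $\sigma_\rho=P_\rho$, take $U=\mathbb{1}$. Then $i\ln\mathbb{1}=0$, which gives $D_c^{\rm ext}(\rho)=0$.

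Conversely, suppose $D_c^{\rm ext}(\rho)=0$. Then the minimizer $U_{\rm opt}$ has $\|i\ln U_{\rm opt}\|=0$, so all its eigenphases vanish and $U_{\rm opt}=\mathbb{1}$. Therefore $P_\rho=U_{\rm opt}\sigma_\rho U_{\rm opt}^\dagger=\sigma_\rho$.

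Alternatively, this step follows directly from the positivity property of $D$ stated in Sec.~\ref{sect2}, applied to the pair $(\sigma_\rho,P_\rho)$ of isospectral states.
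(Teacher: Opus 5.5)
Your argument is the paper's: $\sigma_\rho$ and $P_\rho$ are isospectral, so they lie on a common isospectral manifold, and nonnegativity and the zero condition then follow from the metric properties of $D$. The explicit connecting unitary, the compactness argument for attainment of the minimum, and the direct argument that a zero-norm $U_{\rm opt}$ is the identity only make explicit what the paper takes from the charging-distance reference.

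Two small corrections to those extras. First, $U\mapsto\|i\ln U\|=\arccos\lambda_{\min}\bigl(\tfrac{1}{2}(U+U^\dagger)\bigr)$ is continuous everywhere, so lower semicontinuity holds, but your stated reason is backwards: taking the maximal value $\pi$ at an eigenvalue $-1$ would give upper, not lower, semicontinuity. Second, your unproved aside that $\sigma_\rho$ is unique up to harmless degeneracies fails when populations tie. For equal populations with coherence only between levels 1 and 2, the admissible choice $\sigma_\rho=\rho$ gives $D\geq\pi/2$, while moving the coherence to levels 1 and 3 gives $D\leq\pi/4$. So, as in the paper, well-definedness should be read for the chosen $\sigma_\rho$.
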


\noindent{\bf Proof}~~By construction, \(\sigma_\rho\) is obtained from \(\rho\) through an incoherent unitary transformation.
Hence, $\operatorname{spec}(\sigma_\rho)  =\operatorname{spec}(\rho)$.
On the other hand, \(P_\rho\) is the passive state obtained by rearranging the eigenvalues of \(\rho\) in the energy eigenbasis. Therefore, $\operatorname{spec}(P_\rho) =\operatorname{spec}(\rho)$.
It follows that $\operatorname{spec}(\sigma_\rho) = \operatorname{spec}(P_\rho)$,
so \(\sigma_\rho\) and \(P_\rho\) have the same spectrum lying on the same isospectral manifold.
Thus, the quantum charging distance \(D(\sigma_\rho,P_\rho)\) is well defined.

Since \(D\) is a metric on each isospectral manifold, it is non-negative and vanishes if and only if its two arguments coincide. Consequently, $D_c^{\rm ext}(\rho)\geq 0,\quad D_c^{\rm ext}(\rho)=0\Longleftrightarrow
\sigma_\rho=P_\rho$. This proves the claim.
$\Box$.

Therefore, \(D_c^{\rm ext}(\rho)\) is the quantum charging distance restricted to the coherent extraction process
\(\sigma_\rho \to P_\rho\).
It quantifies the minimal geometric distance required to extract the coherent part of the ergotropy.
In this definition, the battery Hamiltonian \(H_0\) and the driving Hamiltonian \(V_t\) play different roles.
The fixed Hamiltonian \(H_0\) determines the energy eigenbasis, the passive state \(P_\rho\), the active state \(\sigma_\rho\), and the amount of extracted coherent work.
By contrast, \(V_t\) is the control Hamiltonian that generates the unitary path from \(\sigma_\rho\) to \(P_\rho\).
The normalization condition \(\|V_t\|=1\) fixes the available driving resource and prevents the extraction time from being trivially shortened by rescaling the driving strength.
Thus, the work is evaluated with respect to \(H_0\), whereas the extraction distance, or equivalently the normalized extraction time, is evaluated with respect to the normalized driving Hamiltonian \(V_t\).

The geometric meaning of \(D_c^{\rm ext}(\rho)\) is illustrated in Fig.~\ref{qubitbloch} for a qubit battery. After the incoherent part has been extracted, the remaining coherent extraction process is the unitary transformation
\(\sigma_\rho\to P_\rho\). The distance \(D_c^{\rm ext}(\rho)\) is the minimal unitary distance associated with this process.
\begin{figure}
\centering
\includegraphics[width=0.8\linewidth]{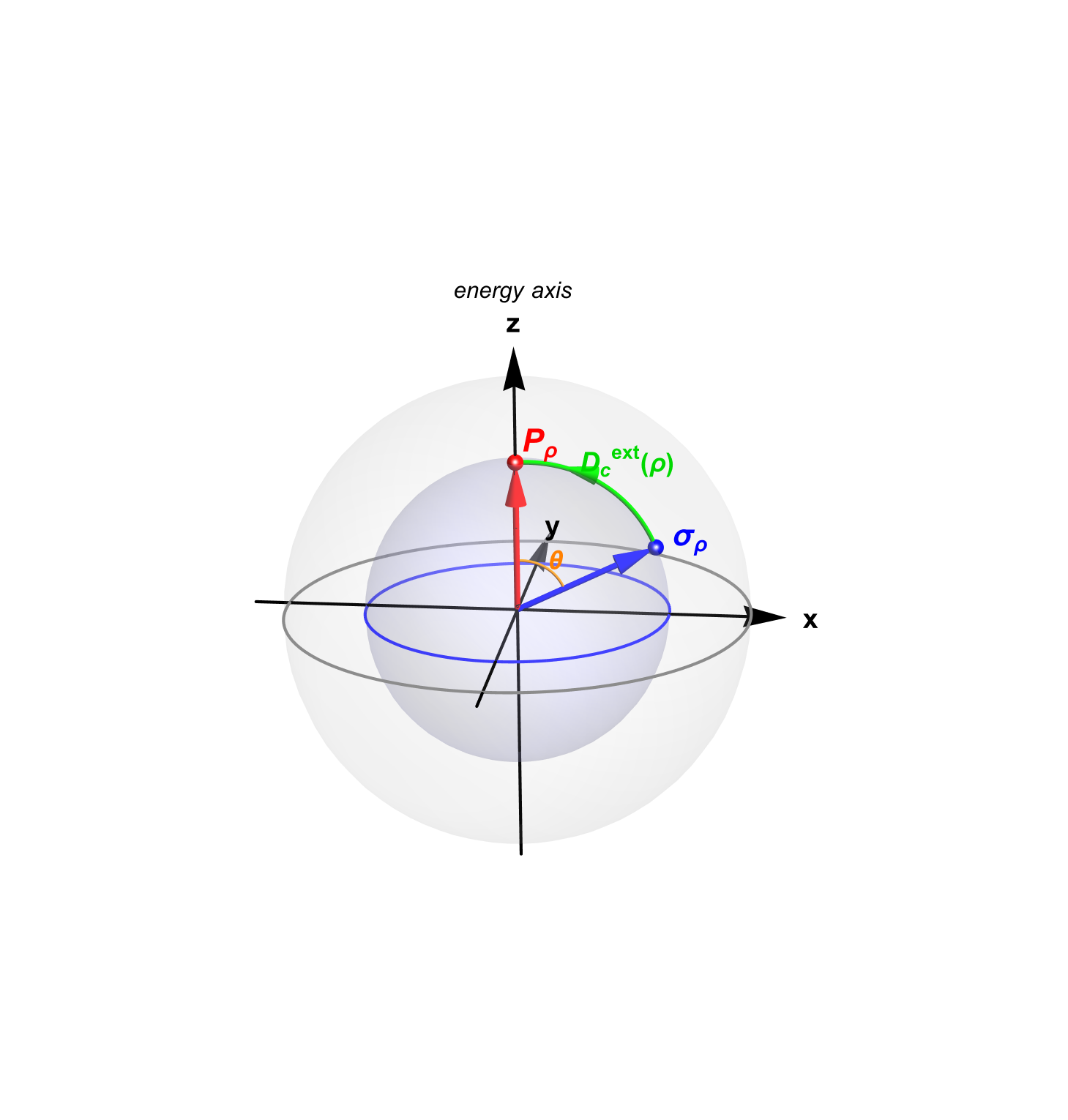}
\caption{
Geometric illustration of the coherent extraction distance \(D_c^{\rm ext}(\rho)\) for a qubit battery.
The green arc shows the minimal unitary distance from the active state \(\sigma_\rho\) to the passive state \(P_\rho\), with the \(z\)-axis defined by \(H_0\).
}
\label{qubitbloch}
\end{figure}

The coherent ergotropy extraction distance has a direct operational meaning in the coherent work extraction process. The coherent ergotropy $\mathcal{E}_c(\rho)$ quantifies the amount of work that can be extracted from the coherence of the initial state, while the coherent ergotropy extraction distance $D_c^{\rm ext}(\rho)$ quantifies the minimal time required to extract this contribution under the normalized driving condition $\|V_t\|=1$. Therefore, the pair $\left(\mathcal{E}_c(\rho),D_c^{\rm ext}(\rho)\right)$ characterizes not only the energetic contribution of coherence, but also the geometric time cost associated with its extraction. This naturally leads to a power-like quantity, obtained by comparing the coherent ergotropy with the shortest time needed to extract it.

\begin{definition}
\label{definition2}
Let $\rho$ be the initial state of a quantum battery with nonzero coherent ergotropy extraction distance, $D_c^{\rm ext}(\rho)>0$. The geometric power capacity of coherent ergotropy is defined as
\begin{equation}
\Pi_c(\rho)=\frac{\mathcal{E}_c(\rho)}{D_c^{\rm ext}(\rho)}=\frac{\Tr\!\left[H_0(\sigma_\rho-P_\rho)\right]}{D(\sigma_\rho,P_\rho)},
\label{eq:geometric_power_capacity}
\end{equation}
where $\mathcal{E}_c(\rho)$ is the coherent ergotropy, and $D_c^{\rm ext}(\rho)=D(\sigma_\rho,P_\rho)$
is the coherent ergotropy extraction distance.
\end{definition}

$\Pi_c(\rho)$ characterizes the maximal average rate at which the coherent part of ergotropy can be extracted when the driving Hamiltonian is normalized by $\|V_t\|=1$.
It combines two aspects of coherence-assisted work extraction: the numerator $\mathcal{E}_c(\rho)$ measures the amount of extractable work originating from coherence, whereas the denominator $D_c^{\rm ext}(\rho)$ measures the minimal geometric time cost required to extract it. Thus, $\Pi_c(\rho)$ provides a geometric power capacity for coherent ergotropy,
see the schematic in Fig.~\ref{schematic}.  
\begin{figure}
\centering
\includegraphics[width=1.0\linewidth]{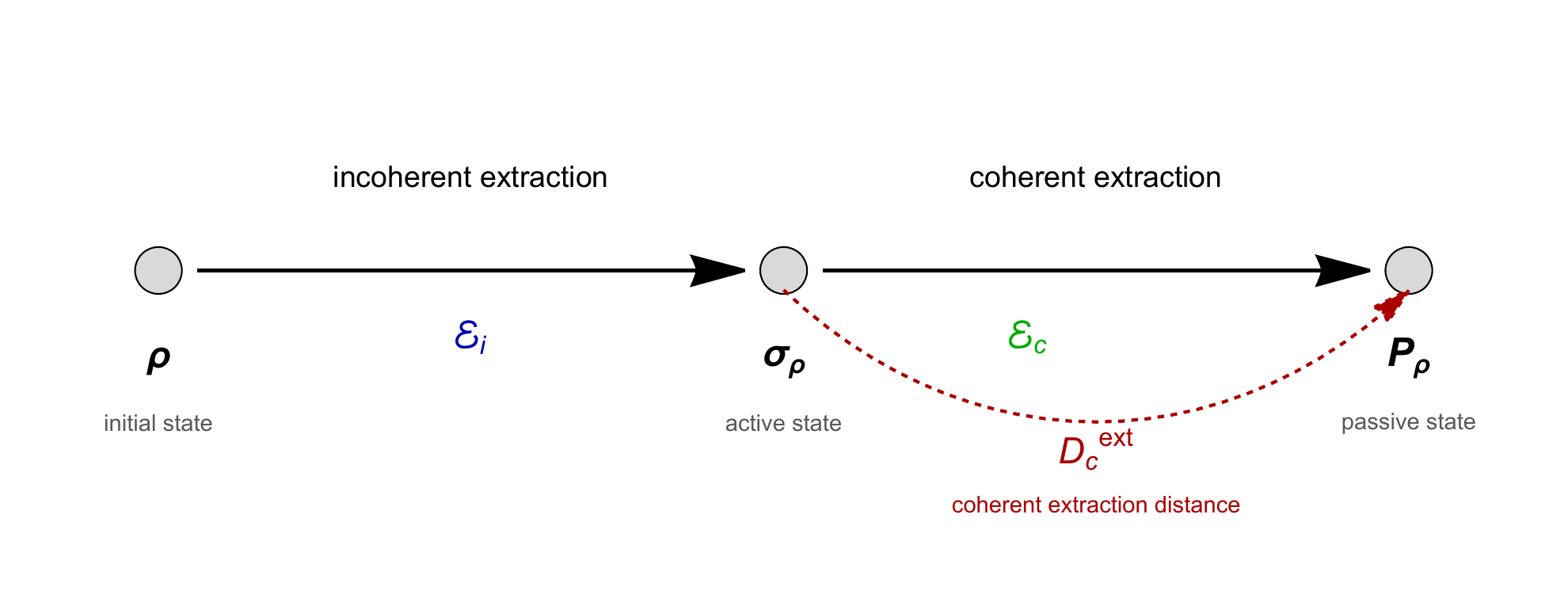}
\caption{The initial state $\rho$ yields active state $\sigma_{\rho}$ after incoherent ergotropy extraction. The unitary process $\sigma_{\rho} \to P_{\rho}$ extracts coherent ergotropy $\mathcal{E}_{c}$ over distance $D_{c}^{\text{ext}}$, defining the geometric power capacity $\Pi_{c}(\rho) = \mathcal{E}_{c} / D_{c}^{\text{ext}}$.}
\label{schematic}
\end{figure}



\begin{definition}
Let $V_t$ be a driving Hamiltonian that realizes the coherent extraction process
$\sigma_\rho \longrightarrow P_\rho=U_T\sigma_\rho U_T^\dagger$,
within a finite time $T_c^{\rm ext}$, where $U_T=\mathcal{T}\exp\!\left(-i\int_0^{T_c^{\rm ext}} V_t\,dt\right)$.
The coherent discharging power of this protocol is defined as
\begin{equation}
    P_c^{\rm ext}(\rho;V_t)
    :=\frac{\mathcal{E}_c(\rho)}{T_c^{\rm ext}},
    \label{eq:actual_coherent_power}
\end{equation}
where
$\mathcal{E}_c(\rho) =\Tr\!\left[H_0(\sigma_\rho-P_\rho)\right]$ is the coherent ergotropy.
\end{definition}

The above definition depends on the concrete driving protocol through the actual extraction time $T_c^{\rm ext}$.
By contrast, the coherent ergotropic power capacity
$\Pi_c(\rho) =\frac{\mathcal{E}_c(\rho)}{D_c^{\rm ext}(\rho)}$
is defined using the coherent ergotropy extraction distance
$D_c^{\rm ext}(\rho)=D(\sigma_\rho,P_\rho)$, which is the minimal extraction time under the unit driving normalization $\|V_t\|=1$.

\begin{lemma}
For any coherent extraction protocol generated by a driving Hamiltonian satisfying the uniform operator-norm constraint,
$\|V_t\|\leq \nu$ with $0\leq t\leq T_c^{\rm ext}$, the actual coherent discharging power is bounded by
\begin{equation}
P_c^{\rm ext}(\rho;V_t)=\frac{\mathcal{E}_c(\rho)}{T_c^{\rm ext}} \leq\nu\,\frac{\mathcal{E}_c(\rho)}{D_c^{\rm ext}(\rho)} =\nu\,\Pi_c(\rho).
 \label{eq:actual_power_bound}
\end{equation}
In particular, under the unit driving convention $\|V_t\|\leq 1$, one obtains
\begin{equation}
    P_c^{\rm ext}(\rho;V_t)
    \leq
    \Pi_c(\rho).
    \label{eq:unit_norm_power_bound}
\end{equation}
\end{lemma}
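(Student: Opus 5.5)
The plan is to reduce the claim to the single inequality
\begin{equation*}
T_c^{\rm ext}\ \geq\ \frac{D_c^{\rm ext}(\rho)}{\nu}
\end{equation*}
for every admissible protocol. The power bound then follows by dividing $\mathcal{E}_c(\rho)\ge 0$ by $T_c^{\rm ext}$. We assume $D_c^{\rm ext}(\rho)>0$ so that $\Pi_c(\rho)$ is defined, and $\nu>0$; the cases $\nu=0$ or $\mathcal{E}_c(\rho)=0$ are trivial or degenerate.

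\textbf{Step 1: time rescaling.} Given $V_t$ on $[0,T_c^{\rm ext}]$ with $\|V_t\|\le\nu$ that maps $\sigma_\rho$ to $P_\rho$, set $s=\nu t$ and $\tilde V_s=V_{s/\nu}/\nu$. Then $\|\tilde V_s\|\le 1$, and
\begin{equation*}
\mathcal{T}\exp\!\Big(-i\int_0^{\nu T_c^{\rm ext}}\tilde V_s\,ds\Big)=U_T .
\end{equation*}
So the same unitary is produced in time $\nu T_c^{\rm ext}$ by a driving of norm at most $1$.

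\textbf{Step 2: comparison with the charging distance.} I would show that the charging distance in Eq.~\eqref{eq:charging_distance_def}, although stated with $\|V_t\|=1$, equals the minimum over drivings with $\|V_t\|\le 1$.

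One direction uses the unitary characterization Eq.~\eqref{eq:charging_distance_unitary}. For any $\|V_t\|\le1$ generating $U$ in time $T$, the standard Lipschitz/triangle-inequality argument on the unitary group with operator-norm-induced metric gives $\|i\ln U\|\le\int_0^T\|V_t\|\,dt\le T$. This holds for the principal logarithm, and the minimum in Eq.~\eqref{eq:charging_distance_unitary} runs over all $U$ with $U\sigma_\rho U^\dagger=P_\rho$, so the bound applies.

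The reverse direction is immediate: the time-independent optimal driving $V_{\rm opt}$, which has unit norm, attains $D$. Hence any norm-$\le1$ protocol needs time at least $D(\sigma_\rho,P_\rho)$. Applied to $\tilde V_s$, this gives $\nu T_c^{\rm ext}\ge D_c^{\rm ext}(\rho)$.

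\textbf{Step 3: conclusion.} Combining the two steps,
\begin{equation*}
P_c^{\rm ext}(\rho;V_t)=\frac{\mathcal{E}_c(\rho)}{T_c^{\rm ext}}\le \nu\,\frac{\mathcal{E}_c(\rho)}{D_c^{\rm ext}(\rho)}=\nu\,\Pi_c(\rho).
\end{equation*}
Setting $\nu=1$ gives Eq.~\eqref{eq:unit_norm_power_bound}.

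\textbf{Main obstacle.} The hard step is the inequality $\|i\ln U\|\le\int_0^T\|V_t\|dt$ for the minimal-norm logarithm, that is, the statement that a time-dependent, non-commuting driving cannot beat the geodesic length. I would cite it from the charging-distance framework~\cite{Ju-Yeon2024PRA}, where the equivalence of Eqs.~\eqref{eq:charging_distance_def} and \eqref{eq:charging_distance_unitary} is established. Failing that, I would prove it by partitioning $[0,T]$ finely, approximating each short-time propagator by $e^{-iV_{t_j}\Delta t}$, and using bi-invariance of the operator-norm geodesic distance on $U(d)$ together with its triangle inequality.
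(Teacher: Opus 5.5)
Your proposal is correct and follows the same route as the paper: rescale time by $\nu$ to get a protocol with $\|\tilde V_s\|\le 1$, then use the minimality of $D_c^{\rm ext}(\rho)$ to obtain $T_c^{\rm ext}\ge D_c^{\rm ext}(\rho)/\nu$, and divide. Your Step~2 and your remark that $\mathcal{E}_c(\rho)\ge 0$ supply details the paper's one-line rescaling argument leaves implicit; Step~2 uses the unitary characterization and $\|i\ln U\|\le\int_0^T\|V_t\|\,dt$ to pass from drivings with $\|V_t\|=1$ to drivings with $\|V_t\|\le 1$.
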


\noindent {\bf Proof}~~The coherent ergotropy extraction distance $D_c^{\rm ext}(\rho)$ is defined as the minimal time required to implement the transformation $\sigma_\rho\longrightarrow P_\rho$ under the unit driving normalization $\|V_t\|=1$.
If a driving Hamiltonian satisfies $\|V_t\|\leq \nu$, then rescaling the driving strength by $\nu$ implies that the extraction time cannot be smaller than
\begin{equation}
    T_c^{\rm ext}
    \geq
    \frac{D_c^{\rm ext}(\rho)}{\nu}.
    \label{eq:coherent_time_bound}
\end{equation}
Therefore,
\begin{equation}
    P_c^{\rm ext}(\rho;V_t)
    =\frac{\mathcal{E}_c(\rho)}{T_c^{\rm ext}}
    \leq    \nu\,
    \frac{\mathcal{E}_c(\rho)}{D_c^{\rm ext}(\rho)}
    =\nu\,\Pi_c(\rho).
\end{equation}
For $\nu=1$, this reduces to \eqref{eq:unit_norm_power_bound}.
$\Box$

Thus, $\Pi_c(\rho)$ should be understood as the coherent ergotropic power capacity per unit driving strength, rather than the actual power of a particular protocol.
The actual coherent discharging power reaches the bound only when the coherent extraction process is implemented by an optimal driving Hamiltonian realizing the distance $D(\sigma_\rho,P_\rho)$.
We illustrate the above framework by analytically solvable battery systems.

\subsection{Two--level system}\label{subsec:qubit_example}
Consider a two-level quantum battery with Hamiltonian,
$H_0=\epsilon\ket{1}\bra{1}$, where $\epsilon>0$ is the energy gap.
We write the initial state in the energy eigenbasis as
\begin{equation}
    \rho
    =
    \frac{1}{2}
    \left(
    I+x\sigma_x+y\sigma_y+z\sigma_z
    \right),
    \label{eq:qubit_bloch_main}
\end{equation}
where $c=\sqrt{x^2+y^2},\quad R=\sqrt{c^2+z^2}.\label{eq:c_R_main}$
Here, $c$ quantifies the coherence amplitude in the energy basis, while $z$ represents the population imbalance. The physical region is given by $c^2+z^2\leq 1$.

For this qubit battery, the coherent ergotropy, the coherent ergotropy extraction distance, and the geometric power capacity for coherent ergotropy  admit the following closed expressions, see the derivations of in Appendix~\ref{app:qubit_derivation},
\begin{equation}
    \mathcal{E}_c(\rho)
    =\frac{\epsilon}{2}
    \left( R-|z|\right),
    \label{eq:qubit_Ec_main}
\end{equation}
\begin{equation}
    D_c^{\rm ext}(\rho)
    =\frac{1}{2}\arccos\!\left(\frac{|z|}{R}\right),
    \label{eq:qubit_Dc_main}
\end{equation}
and
\begin{equation}
    \Pi_c(\rho)
    =\frac{\mathcal{E}_c(\rho)}{D_c^{\rm ext}(\rho)}
    =\epsilon\frac{R-|z|}{\arccos(|z|/R)} .
    \label{eq:qubit_Pi_main}
\end{equation}
For the incoherent case $c=0$, both $\mathcal{E}_c(\rho)$ and $D_c^{\rm ext}(\rho)$ vanish, and we set $\Pi_c(\rho)=0$ by continuity.

Eq.~\eqref{eq:qubit_Ec_main} shows that coherent ergotropy is generated by the transverse Bloch component $c$.
For fixed population imbalance $z$, increasing $c$ increases the Bloch radius $R$, and hence increases the coherent part of extractable work.
Eq.~\eqref{eq:qubit_Dc_main} gives the geometric extraction cost: it is one half of the angle between the Bloch vectors of $\sigma_\rho$ and $P_\rho$ on the isospectral Bloch sphere.
Thus, Eq.~\eqref{eq:qubit_Pi_main} combines the energetic contribution of coherence with the minimal geometric time cost needed to extract it. If the coherent extraction process is implemented by a driving Hamiltonian satisfying $\|V_t\|\leq \nu$, then the actual coherent discharging power satisfies
$P_c^{\rm ext}(\rho;V_t)\leq\nu\,\Pi_c(\rho).$
In particular, under the unit driving convention $\|V_t\|\leq 1$, one obtains
$P_c^{\rm ext}(\rho;V_t)\leq\Pi_c(\rho)$. Therefore, in this qubit example, $\Pi_c(\rho)$ gives an explicit upper bound on the coherent discharging power with unit driving strength.

Fig.~\ref{fig:qubit_example} summarizes the qubit  battery in the parameter space spanned by the coherence amplitude $c$ and the population imbalance $z$. The left panel shows the coherent ergotropy [Eq.~\eqref{eq:qubit_Ec_main}]. It illustrates that the coherent contribution to extractable work is not determined by coherence alone, but also depends on the population imbalance. In particular, increasing the transverse coherent component generally increases the amount of coherent ergotropy available for extraction.
The middle panel shows the coherent ergotropy extraction distance [Eq.~\eqref{eq:qubit_Dc_main}]. This quantity captures the minimal geometric cost of transforming $\sigma_\rho$ into the passive state $P_\rho$. Unlike the coherent ergotropy, it measures the unitary rotation cost required to extract the coherent part of the work.
The right panel shows the geometric power capacity for coherent ergotropy [Eq.~\eqref{eq:qubit_Pi_main}]. It combines the energetic contribution shown in the left panel with the geometric extraction cost shown in the middle panel. Thus, it identifies the regions where coherence is not only energetically useful, but can also be extracted efficiently under the unit driving constraint.
\begin{figure*}[t]
    \centering
    \includegraphics[width=\linewidth]{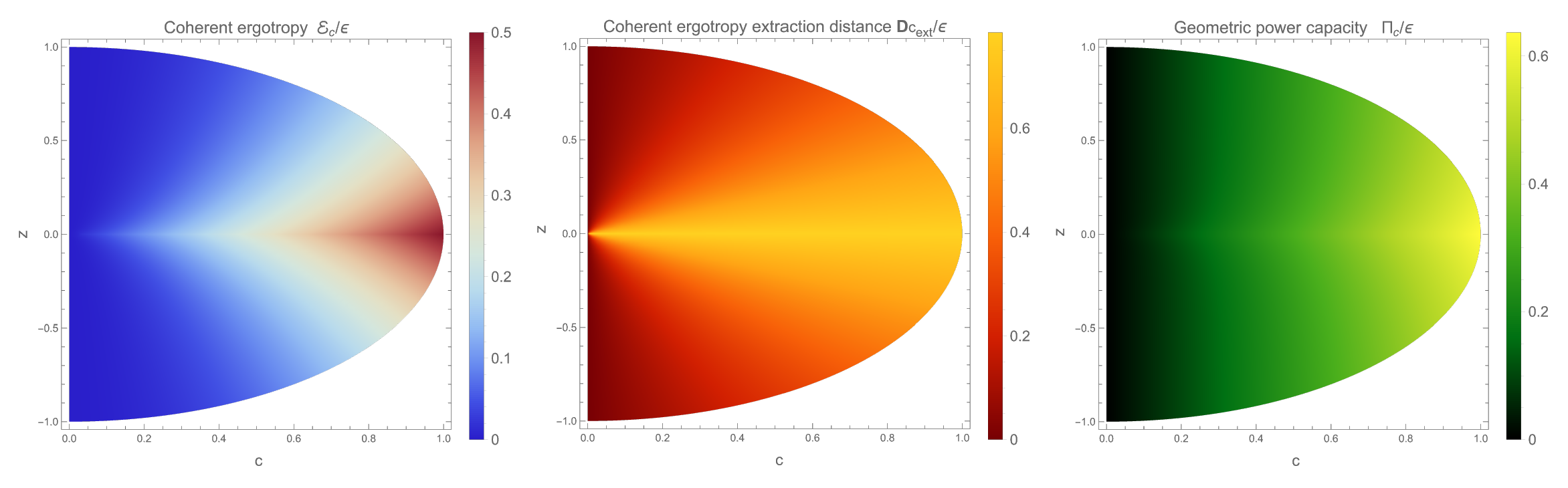}
    \caption{
    Analytical qubit example. In this work, we take $\epsilon=1$.
    The three panels show, respectively, the coherent ergotropy $\mathcal{E}_c(\rho)/\epsilon$[Eq.~\eqref{eq:qubit_Ec_main}], the coherent ergotropy extraction distance $D_c^{\rm ext}(\rho)$[Eq.~\eqref{eq:qubit_Dc_main}], and the geometric power capacity for coherent ergotropy $\Pi_c(\rho)/\epsilon$ [Eq.~\eqref{eq:qubit_Pi_main}] as functions of the coherence amplitude $c=\sqrt{x^2+y^2}$ and the population imbalance $z$.
 }
    \label{fig:qubit_example}
\end{figure*}

\subsection{Three-level system}\label{subsec:qutrit_example}

We next consider a qutrit battery as a simple higher-dimensional example.
Let the battery Hamiltonian be
\begin{equation}
    H_0=\sum_{k=1}^{3}
    \epsilon_k
    \ket{\epsilon_k}\bra{\epsilon_k},~~
    \epsilon_1<\epsilon_2<\epsilon_3.
    \label{eq:qutrit_H0}
\end{equation}
In the energy eigenbasis, we consider the state
\begin{equation}
    \rho    =
    \begin{pmatrix}
        g_1 & c & 0\\
        c^* & g_3 & 0\\
        0 & 0 & g_2
    \end{pmatrix},~~
    g_i=
    \frac{e^{-\beta\epsilon_i}}
    {\sum_j e^{-\beta\epsilon_j}},~~
    |c|\leq \sqrt{g_1g_3}\,.
    \label{eq:qutrit_state}
\end{equation}
For positive temperature, one has $g_1\geq g_2\geq g_3$.
Thus, the diagonal part of $\rho$ is not passively ordered, since the populations appear as $(g_1,g_3,g_2)$.
At the same time, the state contains coherence in the two-dimensional subspace spanned by $\ket{\epsilon_1}$ and $\ket{\epsilon_2}$.

The passive state associated with $\rho$ is
$    P_\rho
    =
    \operatorname{diag}
    \left(
    \lambda_+,g_2,\lambda_-
    \right)$,
where $\lambda_\pm=\frac{g_1+g_3}{2}\pm\Delta$ with $\Delta =\sqrt{\frac{(g_1-g_3)^2}{4}+|c|^2}$.
The incoherent extraction process rearranges the diagonal populations into passive order and transforms $\rho$ into
\begin{equation}
    \sigma_\rho
    =
    \begin{pmatrix}
        g_1 & 0 & c\\
        0 & g_2 & 0\\
        c^* & 0 & g_3
    \end{pmatrix}.
    \label{eq:qutrit_sigma}
\end{equation}
The state $\sigma_\rho$ has the same spectrum as $\rho$ and $P_\rho$, while the remaining coherence is now located in the $\{\ket{\epsilon_1},\ket{\epsilon_3}\}$ subspace.
Therefore, the coherent extraction process $\sigma_\rho\rightarrow P_\rho$
is effectively a two-level rotation in this subspace.

The coherent ergotropy of $\rho$ is
\begin{equation}
    \mathcal{E}_c(\rho)
    =(\epsilon_3-\epsilon_1)\left[\Delta-\frac{g_1-g_3}{2}\right].
    \label{eq:qutrit_Ec}
\end{equation}
The corresponding coherent ergotropy extraction distance is
\begin{equation}
    D_c^{\rm ext}(\rho)
    =\frac{1}{2}
    \arccos\left(\frac{g_1-g_3}{2\Delta}\right).
    \label{eq:qutrit_Dc}
\end{equation}
Hence, the geometric power capacity of coherent ergotropy is
\begin{equation}
\Pi_c(\rho)
=\frac{\mathcal{E}_c(\rho)}{D_c^{\rm ext}(\rho)}
=\frac{(\epsilon_3-\epsilon_1)\left[\Delta-\frac{g_1-g_3}{2}\right]}
{\frac{1}{2}\arccos\left(\frac{g_1-g_3}{2\Delta}\right)}.\label{eq:qutrit_Pic}
\end{equation}
The derivations of Eqs.~\eqref{eq:qutrit_Ec}, \eqref{eq:qutrit_Dc}, and \eqref{eq:qutrit_Pic} are provided in Appendix~\ref{app:qutrit_example}.

It is convenient to introduce the normalized coherence strength
\begin{equation}
\eta=\frac{|c|}{\sqrt{g_1g_3}},\qquad0\leq \eta\leq 1.
\label{eq:qutrit_eta}
\end{equation}
The value $\eta=0$ corresponds to the incoherent case, while $\eta=1$ saturates the positivity constraint of the coherent block.
In terms of $(\beta,\eta)$, we have $|c|=\eta\sqrt{g_1(\beta)g_3(\beta)}$,
and $\Delta(\beta,\eta)=\sqrt{\frac{[g_1(\beta)-g_3(\beta)]^2}{4}+\eta^2 g_1(\beta)g_3(\beta)}$.
Therefore, $\mathcal{E}_c$, $D_c^{\rm ext}$ and $\Pi_c$ can be regarded as functions of the two parameters $\beta$ and $\eta$.
Fig.~\ref{fig:qutrit_example} shows how the coherent ergotropy
\(\mathcal{E}_c\) [Eq.~\eqref{eq:qutrit_Ec}], the coherent extraction distance
\(D_c^{\rm ext}\) [Eq.~\eqref{eq:qutrit_Dc}], and  the geometric power capacity of coherent ergotropy \(\Pi_c=\mathcal{E}_c/D_c^{\rm ext}\) [Eq.~\eqref{eq:qutrit_Pic}] depend on the normalized coherence strength \(\eta\) for fixed inverse temperatures \(\beta\).
As \(\eta\) increases, both the extractable coherent work and the minimal unitary distance required for its extraction generally increase.
The right panel shows their ratio, indicating that \(\Pi_c\) captures the competition between the amount of coherent work and the geometric cost of releasing it.
\begin{figure*}[t]
    \centering
    \includegraphics[width=\linewidth]{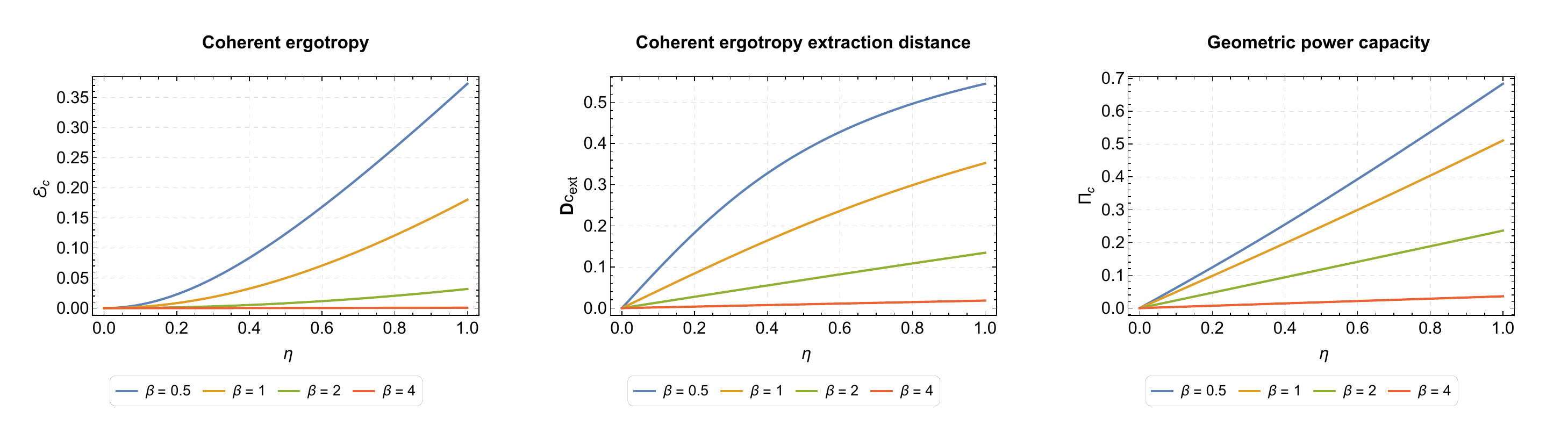}
    \caption{
Qutrit example. The three panels show the coherent ergotropy $\mathcal{E}_c$ [Eq.~\eqref{eq:qutrit_Ec}], the coherent ergotropy extraction distance $D_c^{\rm ext}$ [Eq.~\eqref{eq:qutrit_Dc}], and the geometric power capacity of coherent ergotropy $\Pi_c$ [Eq.~\eqref{eq:qutrit_Pic}] as functions of the normalized coherence strength
$\eta=|c|/\sqrt{g_1g_3}$. Different curves correspond to fixed inverse temperatures $\beta=0.5,1,2,4$.}
\label{fig:qutrit_example}
\end{figure*}

The different curves also show the role of the temperature-dependent population profile.
For smaller $\beta$, the thermal populations are less strongly concentrated in the ground level, so the positivity constraint allows a larger effective coherent block.
As a result, $\mathcal{E}_c$, $D_c^{\rm ext}$, and $\Pi_c$ are generally larger.
For larger $\beta$, the population of the highest level becomes small, which suppresses the allowed coherence and hence reduces all three quantities.

\subsection{Coherent discharging in a three-level battery}
\label{subsec:three_level_coherent_discharge}

We now consider a three-level battery model that admits a simple analytical description of coherent ergotropy extraction.
Consider the battery Hamiltonian,
\begin{equation}
 H_0 =-h\ket{-h}\bra{-h} +h\ket{h}\bra{h},\qquad h>0.
\label{eq:three_level_H0}
\end{equation}
The passive Gibbs state is
\begin{equation}
    \rho_p=\frac{e^{-\beta H_0}}{Z}=
    \operatorname{diag}(a,b,d),
    \label{eq:rho_p_three_level}
\end{equation}
with $a=\frac{e^{\beta h}}{Z}$, $ b=\frac{1}{Z}$, $d=\frac{e^{-\beta h}}{Z}$, $Z=1+2\cosh(\beta h)$. Here the basis is ordered as $\{\ket{-h},\ket{0},\ket{h}\}$, so that $a\geq b\geq d.$

To generate coherence in the energy basis, we introduce the unitary family,
\begin{equation}
    \rho_\theta
    =
    e^{-i\theta X}\rho_p e^{i\theta X},
    \qquad
    0\leq \theta\leq \frac{\pi}{2},
    \label{eq:rho_theta_def}
\end{equation}
where $X =\ket{h}\bra{-h}+\ket{-h}\bra{h}.\label{eq:X_three_level}$
The operator $X$ acts as a Pauli $\sigma_x$ operator in the subspace spanned by $\ket{-h}$ and $\ket{h}$, and satisfies $\|X\|=1$.
The state $\rho_\theta$ takes the explicit form,
\begin{equation}
    \rho_\theta
    =
    \begin{pmatrix}
        a\cos^2\theta+d\sin^2\theta & 0 & i(a-d)\sin\theta\cos\theta\\
        0 & b & 0\\
        -i(a-d)\sin\theta\cos\theta & 0 & a\sin^2\theta+d\cos^2\theta
    \end{pmatrix}.
    \label{eq:rho_theta_matrix}
\end{equation}
Thus, for $0<\theta<\pi/2$, the state contains coherence between the lowest and highest energy levels.

Since $\rho_\theta$ is unitarily connected to the passive state $\rho_p$, the passive state associated with $\rho_\theta$ is still $\rho_p$, namely, $P_{\rho_\theta}=\rho_p$.
Therefore, the extractable work from $\rho_\theta$ is precisely its total ergotropy,
\begin{equation}
    \mathcal{E}(\rho_\theta)
    =
    \Tr\!\left[
    H_0(\rho_\theta-\rho_p)
    \right]
    =
    \frac{4h\sinh(\beta h)}
    {1+2\cosh(\beta h)}
    \sin^2\theta .
    \label{eq:E_theta_three_level}
\end{equation}
This total ergotropy is purely coherent whenever the dephased state $\Delta[\rho_\theta]$ remains passive.
In this case, the incoherent ergotropy vanishes,
$\mathcal{E}_i(\rho_\theta)=0$,
and hence $\mathcal{E}(\rho_\theta)=\mathcal{E}_c(\rho_\theta).$
The coherent-only condition is equivalent to
\begin{equation}
    \sin^2\theta
    \leq
    \frac{1}{e^{\beta h}+1}.
    \label{eq:coherent_only_condition}
\end{equation}
Therefore, we have
\begin{equation}
    \mathcal{E}_c(\rho_\theta)
    =
    \mathcal{E}(\rho_\theta)
    =
    \frac{4h\sinh(\beta h)}
    {1+2\cosh(\beta h)}
    \sin^2\theta
    \label{eq:Ec_theta_three_level}
\end{equation}
for $0\leq \theta\leq \theta_c(\beta)$, where
$$
    \theta_c(\beta)
    =
    \arcsin
    \left[
    \frac{1}{\sqrt{e^{\beta h}+1}}
    \right].
    \label{eq:theta_c}
$$

The coherent ergotropy extraction distance is particularly simple.
Since $\rho_\theta=e^{-i\theta X}\rho_p e^{i\theta X}$ and $\|X\|=1$, the minimal unit-norm driving time from $\rho_\theta$ to $\rho_p$ is
\begin{equation}
    D_c^{\rm ext}(\rho_\theta)
    =
    D(\rho_\theta,\rho_p)
    =
    \theta .
    \label{eq:Dc_theta_three_level}
\end{equation}
Consequently, in the coherent-only region, the geometric power capacity of coherent ergotropy is
\begin{equation}
    \Pi_c(\rho_\theta)
    =
    \frac{\mathcal{E}_c(\rho_\theta)}
    {D_c^{\rm ext}(\rho_\theta)}
    =
    \frac{4h\sinh(\beta h)}
    {1+2\cosh(\beta h)}
    \frac{\sin^2\theta}{\theta}.
    \label{eq:Pi_theta_three_level}
\end{equation} 
At $\theta=0$, we set $\Pi_c(\rho_0)=0$ by continuity.

Fig.~\ref{fig:three_level_coherent_extension} shows the dependence of the coherent extension on the rotation angle $\theta$.
The total ergotropy $\mathcal{E}(\rho_\theta)$ [Eq.~\eqref{eq:Ec_theta_three_level}] increases with $\theta$, whereas the extraction distance is simply $ D(\rho_\theta,\rho_p)=\theta$ [Eq.~\eqref{eq:Dc_theta_three_level}].
Their ratio gives the geometric power capacity of coherent ergotropy [Eq.~\eqref{eq:Pi_theta_three_level}], which measures the extractable work per unit minimal geometric distance.
In the coherent-only region, the dephased state $\Delta[\rho_\theta]$ remains passive, so the incoherent ergotropy vanishes and the total ergotropy is entirely due to energy-basis coherence.
Hence, in this region,
$ \mathcal{E}(\rho_\theta)=\mathcal{E}_c(\rho_\theta),\quad D(\rho_\theta,\rho_p) =D_c^{\rm ext}(\rho_\theta)$.
The right panel therefore gives $\Pi_c(\rho_\theta)$, namely, the geometric power capacity of coherent ergotropy.
\begin{figure*}[t]
    \centering
    \includegraphics[width=\linewidth]{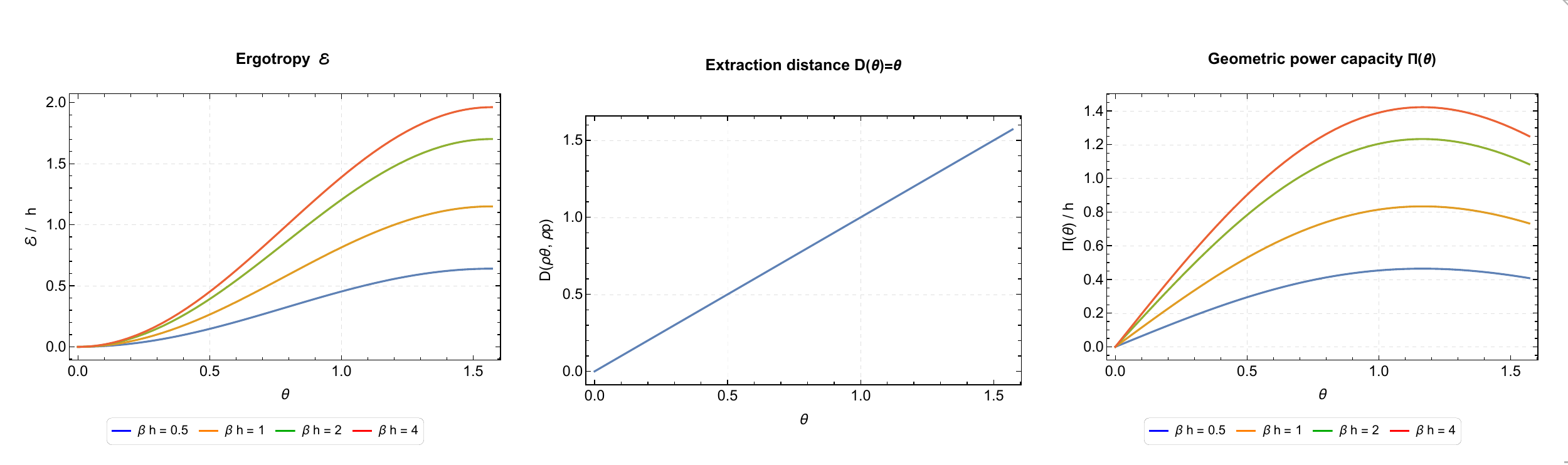}
    \caption{
    Coherent extension of the three-level discharging model.
    The three panels show the total ergotropy $\mathcal{E}(\rho_\theta)$ [Eq.~\eqref{eq:Ec_theta_three_level}], the extraction distance
    $D(\rho_\theta,\rho_p)=\theta$ [Eq.~\eqref{eq:Dc_theta_three_level}], and the corresponding the geometric power capacity of coherent ergotropy $\Pi(\rho_\theta)$ [Eq.~\eqref{eq:Pi_theta_three_level}] as functions of the rotation angle $\theta$. Different curves correspond to different values of $\beta h$.}
 
    \label{fig:three_level_coherent_extension}
\end{figure*}

\section{Bounds on the geometric power capacity of coherent ergotropy $\Pi_c(\rho)$}
\label{sect4}
In this section, we derive general bounds on the geometric power capacity of coherent ergotropy
$\Pi_c(\rho)=\mathcal{E}_c(\rho)/D_c^{\rm ext}(\rho)$.
The derivation combines two independent ingredients: bounds on the coherent ergotropy and the geometric bounds on the coherent extraction distance.
An upper bound on $\mathcal{E}_c(\rho)$ together with a lower bound on $D_c^{\rm ext}(\rho)$ gives an upper bound on $\Pi_c(\rho)$, while a lower bound on $\mathcal{E}_c(\rho)$ together with an upper bound on $D_c^{\rm ext}(\rho)$ gives a lower bound.

We first recall the  bounds on coherent ergotropy~\cite{coherentergotropyPRL}.
The coherent ergotropy is related to the relative entropy of coherence~\cite{BCP2014},
\begin{equation}
    C_{\rm rel}(\rho)
    =
    S(\Delta[\rho])-S(\rho),
    \label{eq:relative_entropy_coherence_bounds}
\end{equation}
where $S(\rho)=-\Tr(\rho\log\rho)$ is the von Neumann entropy, and $\Delta[\rho]$ denotes the dephasing of $\rho$ in the energy eigenbasis.
Introducing the Gibbs state
\begin{equation}
    \rho_\beta
    =
    \frac{e^{-\beta H_0}}{Z_\beta}
    \label{eq:gibbs_state_bounds}
\end{equation}
with $Z_\beta=\Tr(e^{-\beta H_0})$, one has
\begin{equation}
    \beta\mathcal{E}_c(\rho)
    =
    C_{\rm rel}(\rho)
    +
    D(P_\delta\Vert\rho_\beta)
    -
    D(P_\rho\Vert\rho_\beta).
    \label{eq:Ec_identity_beta}
\end{equation}
Here $D(\cdot\Vert\cdot)$ denotes the quantum relative entropy, $P_\rho$ is the passive state associated with $\rho$, and $P_\delta$ is the passive state associated with the dephased state $\Delta[\rho]$.
Equivalently, $P_\delta$ can be identified with the passive state of the population part of $\rho$.

Since the quantum relative entropy is non-negative, Eq.~\eqref{eq:Ec_identity_beta} implies that
\begin{equation}
    C_{\rm rel}(\rho)-D(P_\rho\Vert\rho_\beta)
    \leq
    \beta\mathcal{E}_c(\rho)
    \leq
    C_{\rm rel}(\rho)+D(P_\delta\Vert\rho_\beta).
    \label{eq:Ec_bounds_raw}
\end{equation}
Equivalently, we have
\begin{equation}
    \mathcal{E}_c^-(\rho;\beta)
    \leq
    \mathcal{E}_c(\rho)
    \leq
    \mathcal{E}_c^+(\rho;\beta),
    \label{eq:Ec_bounds_pm}
\end{equation}
where
\begin{equation}
    \mathcal{E}_c^-(\rho;\beta)
    :=
    \max\left\{
    0,\,
    \frac{
    C_{\rm rel}(\rho)-D(P_\rho\Vert\rho_\beta)
    }{\beta}
    \right\}
    \label{eq:Ec_lower_beta}
\end{equation}
and
\begin{equation}
    \mathcal{E}_c^+(\rho;\beta)
    :=
    \frac{
    C_{\rm rel}(\rho)+D(P_\delta\Vert\rho_\beta)
    }{\beta}.
    \label{eq:Ec_upper_beta}
\end{equation}
The maximum in the lower bound reflects the non-negativity of coherent ergotropy.
These bounds show that the coherent ergotropy is controlled not only by the coherence of the initial state, but also by the thermodynamic distinguishability between the relevant passive states and the Gibbs reference state.

We next bound the $D_c^{\rm ext}(\rho)$. Since $\sigma_\rho$ and $P_\rho$ have the same spectrum, the coherent ergotropy extraction distance $D_c^{\rm ext}(\rho)= D(\sigma_\rho,P_\rho)$
is a special instance of the quantum charging distance between two isospectral states~\cite{Ju-Yeon2024PRA}.
For pure states $\rho=\ket{\psi}\bra{\psi},\quad \sigma=\ket{\phi}\bra{\phi}$,
the quantum charging distance has the following analytic expression,
\begin{equation}
    D(\ket{\psi},\ket{\phi})
    =
    \arccos|\braket{\psi|\phi}|.
    \label{eq:pure_charging_distance_bounds}
\end{equation}
Thus, for pure states, it coincides with the Bures angle, and its maximal value is $\pi/2$, attained by orthogonal states.

For mixed states, the Bures angle is defined as
\begin{equation}
    \theta_B(\rho,\sigma)
    =\arccos \mathcal{F}(\rho,\sigma),
    \label{eq:Bures_angle_def_bounds}
\end{equation}
where $\mathcal{F}(\rho,\sigma)=\Tr\sqrt{\sqrt{\rho}\sigma\sqrt{\rho}}$
is the Uhlmann fidelity. The quantum charging distance satisfies
\begin{equation}
    \theta_B(\rho,\sigma)
    \leq
    D(\rho,\sigma).
    \label{eq:Bures_lower_bound}
\end{equation}
Moreover, for any two $d$-dimensional isospectral states,
\begin{equation}
    D(\rho,\sigma)
    \leq
    \pi\left(1-\frac{1}{d}\right).
    \label{eq:charging_distance_upper_bound}
\end{equation}
Applying these bounds to the pair $(\sigma_\rho,P_\rho)$, we have
\begin{equation}
    D_-(\rho)
    \leq
    D_c^{\rm ext}(\rho)
    \leq
    D_+,
    \label{eq:Dc_bounds_pm}
\end{equation}
where
$D_-(\rho)=\theta_B(\sigma_\rho,P_\rho)$
and $ D_+:=\pi\left(1-\frac{1}{d}\right)$, with $d$ the Hilbert-space dimension of the battery.

From the energetic bounds in \eqref{eq:Ec_bounds_pm} and the geometric bounds in \eqref{eq:Dc_bounds_pm}, we obtain the following bound on the geometric power capacity of coherent ergotropy.

\begin{proposition}
\label{prop:Pi_c_bounds}
For a nontrivial coherent extraction process with $D_c^{\rm ext}(\rho)>0$, the geometric power capacity of coherent ergotropy satisfies
\begin{equation}
    \frac{
    \mathcal{E}_c^-(\rho;\beta)
    }{
    D_+
    }
    \leq
    \Pi_c(\rho)
    \leq
    \frac{
    \mathcal{E}_c^+(\rho;\beta)
    }{
    D_-(\rho)
    }
    \label{eq:Pi_c_bounds_abstract}
\end{equation}
for every finite positive $\beta$ for which $D_-(\rho)>0$.
Equivalently,
\begin{equation}
    \frac{
    \max\left\{
    0,\,
    C(\rho)-D(P_\rho\|\rho_\beta)
    \right\}
    }{
    \beta\,\pi\left(1-\frac{1}{d}\right)
    }
    \leq
    \Pi_c(\rho)
    \leq
    \frac{
    C(\rho)+D(P_\delta\|\rho_\beta)
    }{
    \beta\,\theta_B(\sigma_\rho,P_\rho)
    }.
    \label{eq:Pi_c_bounds_explicit}
\end{equation}
\end{proposition}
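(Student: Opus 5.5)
The plan is to combine the energetic bounds \eqref{eq:Ec_bounds_pm} with the geometric bounds \eqref{eq:Dc_bounds_pm} through elementary monotonicity of a ratio of non-negative quantities. Throughout, fix a finite $\beta>0$ for which $D_-(\rho)=\theta_B(\sigma_\rho,P_\rho)>0$. By Lemma~\ref{lemma1} and the hypothesis, $D_c^{\rm ext}(\rho)>0$, so $\Pi_c(\rho)$ is well defined by Definition~\ref{definition2}. I will also use $\mathcal{E}_c(\rho)\geq 0$. This follows because $\sigma_\rho$ and $P_\rho$ are isospectral and $P_\rho$ minimizes the energy on that unitary orbit, which gives $\Tr[H_0(\sigma_\rho-P_\rho)]\geq 0$.

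For the upper bound, I chain $D_c^{\rm ext}(\rho)\geq D_-(\rho)>0$ with $\mathcal{E}_c(\rho)\leq \mathcal{E}_c^+(\rho;\beta)$. The numerator $\mathcal{E}_c(\rho)$ is non-negative, so decreasing the denominator can only increase the ratio:
$\Pi_c(\rho)=\mathcal{E}_c(\rho)/D_c^{\rm ext}(\rho)\leq \mathcal{E}_c(\rho)/D_-(\rho)\leq \mathcal{E}_c^+(\rho;\beta)/D_-(\rho)$.
In the last step I divide by the positive number $D_-(\rho)$. Note that $\mathcal{E}_c^+\geq0$ automatically, because $C_{\rm rel}\geq 0$ and relative entropy is non-negative.

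For the lower bound, I use $0<D_c^{\rm ext}(\rho)\leq D_+$ and $\mathcal{E}_c(\rho)\geq \mathcal{E}_c^-(\rho;\beta)\geq 0$:
$\Pi_c(\rho)\geq \mathcal{E}_c^-(\rho;\beta)/D_c^{\rm ext}(\rho)\geq \mathcal{E}_c^-(\rho;\beta)/D_+$.
The second inequality needs the numerator $\mathcal{E}_c^-$ to be non-negative. This is exactly what the $\max\{0,\cdot\}$ in \eqref{eq:Ec_lower_beta} guarantees, and it is also the only place where the sign of the bound matters.

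The explicit form \eqref{eq:Pi_c_bounds_explicit} then follows by substituting \eqref{eq:Ec_lower_beta} and \eqref{eq:Ec_upper_beta}, together with $D_+=\pi(1-1/d)$ and $D_-=\theta_B(\sigma_\rho,P_\rho)$, with $C(\rho)$ read as $C_{\rm rel}(\rho)$. Since $\beta>0$, the factor $1/\beta$ can be pulled out of the maximum. The main point needing care is not computational. It is that the two one-sided bounds are combined only in the directions compatible with the signs: non-negativity of $\mathcal{E}_c$ is required for the upper chain, and non-negativity of $\mathcal{E}_c^-$ for the lower chain. The underlying inputs, namely \eqref{eq:Ec_identity_beta}, \eqref{eq:Bures_lower_bound} and \eqref{eq:charging_distance_upper_bound}, are taken as established earlier.
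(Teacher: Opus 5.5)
Your proof is correct and takes the same route as the paper: it pairs the upper energetic bound $\mathcal{E}_c^+$ with the Bures-angle lower bound $D_-$, and the lower energetic bound $\mathcal{E}_c^-$ with the dimensional upper bound $D_+$, just as the paper does when it obtains the proposition from Eqs.~\eqref{eq:Ec_bounds_pm} and \eqref{eq:Dc_bounds_pm}. Your sign checks are details the paper leaves implicit: non-negativity of $\mathcal{E}_c$ from passivity of $P_\rho$ on the unitary orbit, and non-negativity of $\mathcal{E}_c^-$ from the $\max\{0,\cdot\}$.
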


The bound above has a simple interpretation.
The numerator is controlled by thermodynamic and coherence-theoretic quantities, namely, the relative entropy of coherence and the relative-entropy distances from the passive states to the Gibbs reference state.
The denominator is controlled by the geometry of the isospectral manifold on which the coherent extraction process takes place.
Therefore, $\Pi_c(\rho)$ is constrained simultaneously by the amount of coherent work available and by the minimal geometric distance required to extract it.

One can further optimize over the auxiliary inverse temperature $\beta$.
Define
\begin{equation*}
\mathcal{E}_{c,\mathrm{opt}}^-:= \sup_{\beta>0}\mathcal{E}_c^-(\rho;\beta),~~
\mathcal{E}_{c,\mathrm{opt}}^+:=\inf_{\beta>0}\mathcal{E}_c^+(\rho;\beta).
\end{equation*}
Then the certified window for the geometric power capacity of coherent ergotropy becomes
\begin{equation}
\frac{ \mathcal{E}_{c,\mathrm{opt}}^-}{\pi\left(1-\frac{1}{d}\right)}
\leq\Pi_c(\rho)\leq\frac{ \mathcal{E}_{c,\mathrm{opt}}^+}{\theta_B(\sigma_\rho,P_\rho)}.
 \label{eq:Pi_c_optimized_bounds}
\end{equation}
This form makes clear that the geometric power capacity of coherent ergotropy is neither determined solely by the coherent ergotropy nor solely by the extraction distance.
It is a resource--geometry ratio: the coherent ergotropy quantifies the available coherent work, whereas the coherent extraction distance quantifies the minimal geometric cost required to release it.

Finally, using the relation between the geometric power capacity of coherent ergotropy and the actual coherent discharging power, one has that if the driving Hamiltonian satisfies $\|V_t\|\leq \nu$,
then $P_c^{\rm ext}(\rho;V_t)\leq \nu\,\Pi_c(\rho)$. Together with the upper bound in Eq.~\eqref{eq:Pi_c_optimized_bounds}, this gives rise to
\begin{equation}
P_c^{\rm ext}(\rho;V_t) \leq \nu\,
 \frac{\mathcal{E}_{c,\mathrm{opt}}^+}{\theta_B(\sigma_\rho,P_\rho)}.
\label{eq:actual_power_upper_bound}
\end{equation}
Thus, the actual coherent discharging power is bounded by both the available coherent work and the geometric cost of implementing the coherent extraction process.

\emph{Two-level system.}
\label{subsec:qubit_bounds_example}
We first illustrate the above bounds by the two-level battery introduced in
Sec.~\ref{subsec:qubit_example}.
We use the same Hamiltonian $H_0=\epsilon\ket{1}\bra{1}$ and the same Bloch-state parametrization $\rho = \frac{1}{2}\left(I+x\sigma_x+y\sigma_y+z\sigma_z\right)$.
Let $ c=\sqrt{x^2+y^2}$ and $R=\sqrt{c^2+z^2}$.
For fixed Bloch radius $R$, we parametrize the population imbalance and the coherence amplitude as
$|z|=R\cos\alpha$, $c=R\sin\alpha$ for $0\leq \alpha\leq \frac{\pi}{2}$.
Here $\alpha$ measures the angular separation between the coherence-preserving active state
$\sigma_\rho$ and the passive state $P_\rho$ on the isospectral Bloch sphere.

For this qubit model, the coherent ergotropy, the coherent ergotropy extraction distance, and the geometric power capacity of coherent ergotropy are
\begin{equation}
    \mathcal{E}_c(\rho)
    =
    \frac{\epsilon R}{2}
    \left(
    1-\cos\alpha
    \right),
    \label{eq:qubit_Ec_exact_bounds}
\end{equation}
\begin{equation}
    D_c^{\rm ext}(\rho)
    =
    \frac{\alpha}{2},
    \label{eq:qubit_Dc_exact_bounds}
\end{equation}
respectively. Therefore,
\begin{equation}
    \Pi_c(\rho)
    =
    \frac{\mathcal{E}_c(\rho)}
    {D_c^{\rm ext}(\rho)}
    =
    \epsilon R
    \frac{1-\cos\alpha}{\alpha}.
    \label{eq:qubit_Pi_exact_bounds}
\end{equation}
At $\alpha=0$, we set $\Pi_c(\rho)=0$ by continuity.

We choose the auxiliary Gibbs state such that $ \rho_{\beta_*}=P_\rho $.
Equivalently,
\begin{equation*}
    \beta_*
    =    \frac{2}{\epsilon}
    \operatorname{arctanh} R,~~ 0<R<1.
    \label{eq:beta_star_qubit}
\end{equation*}
With this choice, the upper bound on the coherent ergotropy is saturated.
The distance bounds become
\begin{equation*}
    D_-(\rho)
    \leq
    D_c^{\rm ext}(\rho)
    \leq
    D_+,~~
    D_+ = \frac{\pi}{2},
    \label{eq:qubit_distance_bounds_example}
\end{equation*}
where $D_-(\rho)=\theta_B(\sigma_\rho,P_\rho)$ is the Bures-angle lower bound.

Combining the thermodynamic bounds on \(\mathcal{E}_c(\rho)\) with the geometric bounds on \(D_c^{\rm ext}(\rho)\) gives valid bounds on the geometric power capacity of coherent ergotropy,
\begin{equation}
    \frac{\mathcal{E}_c^-(\rho;\beta_*)}{D_+ }\leq\Pi_c(\rho)
    \leq\frac{\mathcal{E}_c^+(\rho;\beta_*)}{ D_-(\rho) }.
 \label{eq:qubit_Pi_bounds_window}
\end{equation}
Fig.~\ref{fig:qubit_bounds_example} illustrates this construction.
The left panel compares the exact coherent ergotropy with its thermodynamic lower and upper bounds.
The middle panel compares the exact coherent extraction distance with its geometric lower and upper bounds.
Combining these two sets of bounds yields the certified window for $\Pi_c(\rho)$ shown in the right panel.
This example shows that even when the coherent ergotropy can be computed exactly, the geometric power capacity of coherent ergotropy is further constrained by the minimal distance required to extract it.
\begin{figure*}[t]
    \centering
    \includegraphics[width=\linewidth]{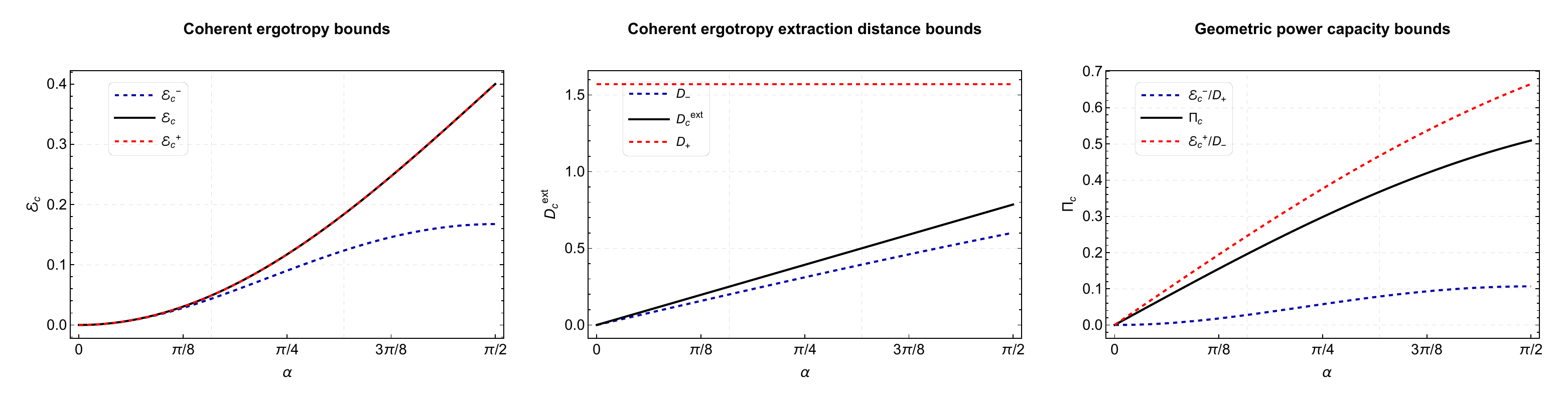}
    \caption{
    Qubit illustration of the bounds on the geometric power capacity of coherent ergotropy of coherent ergotropy.
    The left panel shows the exact coherent ergotropy $\mathcal{E}_c$ together with its bounds
    $\mathcal{E}_c^-$ and $\mathcal{E}_c^+$.
    The middle panel shows the exact coherent extraction distance $D_c^{\rm ext}$ together with the geometric bounds
    $D_-$ and $D_+=\pi/2$.
    The right panel shows the resulting certified bounds on
    $\Pi_c=\mathcal{E}_c/D_c^{\rm ext}$.
    The example demonstrates how thermodynamic bounds on coherent ergotropy and geometric bounds on extraction distance combine to constrain the geometric power capacity of coherent ergotropy.
    }
    \label{fig:qubit_bounds_example}
\end{figure*}

\emph{Three-level system.}\label{subsec:qutrit_bounds_example}
We now use the qutrit family introduced in Sec.~\ref{subsec:qutrit_example} to illustrate the bounds on the geometric power capacity of coherent ergotropy. We take
$\varepsilon_1=0$, $\varepsilon_2=1$, $\varepsilon_3=2$ and use the normalized coherence strength
$\eta=\frac{|c|}{\sqrt{g_1g_3}}$, $0\leq \eta\leq 1 $. Here $\eta=0$ corresponds to the incoherent case, while $\eta=1$ saturates the positivity constraint of the coherent block. For this family, the exact coherent ergotropy $\mathcal{E}_c(\rho)$, coherent extraction distance $D_c^{\rm ext}(\rho)$ and geometric power capacity $\Pi_c(\rho)$ are given by Eqs.~\eqref{eq:qutrit_Ec}, \eqref{eq:qutrit_Dc} and \eqref{eq:qutrit_Pic}, respectively.
We compare these exact quantities with the general bounds derived.


The  $\mathcal{E}_c(\rho)$ is bounded as in Eq.~\eqref{eq:Ec_bounds_pm}, while the $D_c^{\rm ext}(\rho)$ satisfies the geometric bounds in Eq.~\eqref{eq:Dc_bounds_pm}.
For the  qutrit system ($d=3$), the dimension-dependent upper bound becomes
\begin{equation*}
    D_+
    =
    \pi\left(1-\frac{1}{3}\right)
    =
    \frac{2\pi}{3}.
\end{equation*}
The lower bound is given by $D_-(\rho)=\theta_B(\sigma_\rho,P_\rho)$.
Combining these two sets of bounds, for $D_-(\rho)>0$ we obtain the following two-sided bounds,
\begin{equation*}
\frac{\mathcal{E}_c^-(\rho;\beta) }{   D_+   }  \leq  \Pi_c(\rho)\leq\frac{ \mathcal{E}_c^+(\rho;\beta) }{ D_-(\rho)}.
\label{eq:qutrit_Pi_bounds_section4}
\end{equation*}

Thus, the geometric power capacity of coherent ergotropy is constrained simultaneously by the thermodynamic bounds on the coherent ergotropy and by the geometric bounds on the minimal extraction distance.
Fig.~\ref{fig:qutrit_bounds_example} shows the bound structure for this qutrit state.
As the normalized coherence strength $\eta$ increases, the coherent ergotropy increases, while the coherent extraction distance also grows because a larger unitary rotation is required to reach the passive state.
The right panel shows the resulting bounds on $\Pi_c(\rho)$.
This confirms that $\Pi_c(\rho)$ is not determined by the coherent ergotropy alone, but by the ratio between the coherent work content and the minimal geometric cost required to extract it.
\begin{figure*}[t]
    \centering
    \includegraphics[width=\linewidth]{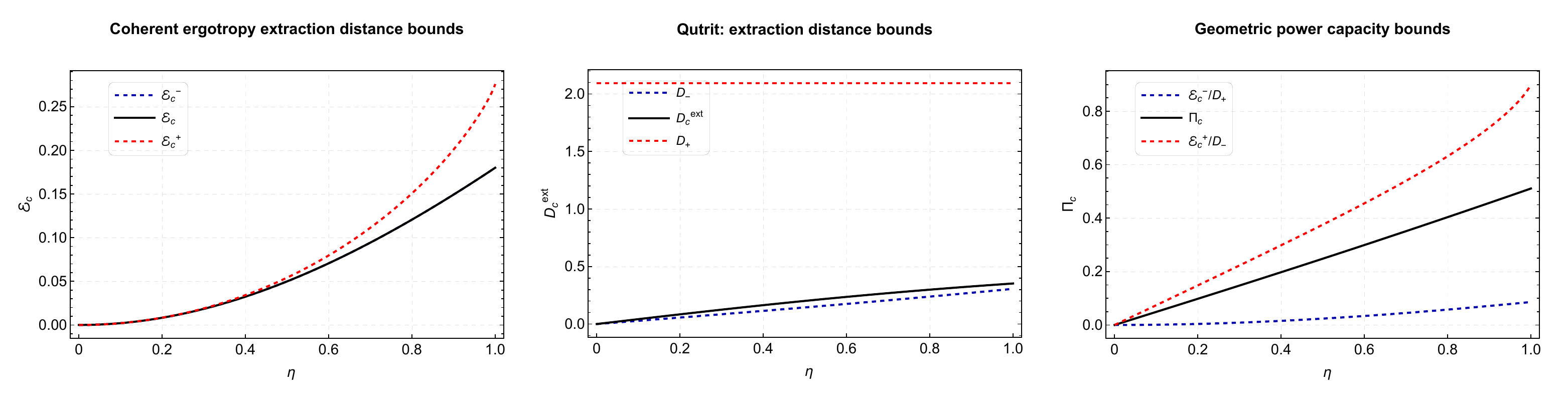}
    \caption{
    Bounds on the geometric power capacity of coherent ergotropy for qutrit case.
    The left panel compares the exact coherent ergotropy $\mathcal{E}_c$ with its thermodynamic lower and upper bounds $\mathcal{E}_c^-$ and $\mathcal{E}_c^+$ as functions of the normalized coherence strength $\eta=|c|/\sqrt{g_1g_3}$.
    The middle panel compares the exact coherent extraction distance $D_c^{\rm ext}$ with the geometric lower and upper bounds $D_-$ and $D_+=2\pi/3$. The right panel shows the resulting certified bounds on $\Pi_c=\mathcal{E}_c/D_c^{\rm ext}$. This example illustrates that the geometric power capacity of coherent ergotropy is bounded by the combined thermodynamic constraint on extractable coherent work and geometric constraint on the minimal extraction distance.}
    \label{fig:qutrit_bounds_example}
\end{figure*}

\section{Bounds on the geometric power capacity of coherence ergotropy based on different coherence measures}
The bounds derived above are expressed in terms of the relative entropy of coherence
measure $C_r(\rho)$. In fact, other coherence
measures may give rise to clearer operational meanings or tighter bounds in different regimes.
Generally, the same construction can be applied to any coherence measure $M(\rho)$ that bounds the relative entropy of coherence as
\begin{equation}
    L_M(\rho)
    \leq
    C_r(\rho)
    \leq
    U_M(\rho),
    \label{eq:generic_coherence_measure_bounds}
\end{equation}
such that when $M=C_r$, $L_M(\rho)=U_M(\rho)=C_r(\rho)$.
Then Eq.~\eqref{eq:Ec_identity_beta} implies the $M$-induced coherent ergotropy bounds
\begin{equation}
    \mathcal{E}_{c,M}^-(\rho;\beta)
    :=
    \max\left\{
    0,\,
    \frac{
    L_M(\rho)-D_{\rm rel}(P_\rho\|\rho_\beta)
    }{\beta}
    \right\}
    \label{eq:Ec_M_lower}
\end{equation}
and
\begin{equation}
    \mathcal{E}_{c,M}^+(\rho;\beta)
    :=
    \frac{
    U_M(\rho)+D_{\rm rel}(P_\delta\|\rho_\beta)
    }{\beta}.
    \label{eq:Ec_M_upper}
\end{equation}
Consequently,
\begin{equation}
    \frac{
    \mathcal{E}_{c,M}^-(\rho;\beta)
    }{
    D_+
    }
    \leq
    \Pi_c(\rho)
    \leq
    \frac{
    \mathcal{E}_{c,M}^+(\rho;\beta)
    }{
    D_-(\rho)
    }.
    \label{eq:Pi_c_M_induced_bounds}
\end{equation}
\eqref{eq:Pi_c_M_induced_bounds} shows that different coherence measures induce different  bounds on the geometric power capacity of coherent ergotropy.
Therefore, the choice of coherence measure affects not only the estimation of coherent ergotropy itself, but also the estimation of how efficiently this coherent work can be geometrically extracted.

Consider the experimentally accessible $l_1$-coherence measure \cite{BCP2014},
\begin{equation}
    C_{l_1}(\rho)=\sum_{i\neq j}|\rho_{ij}|.
\end{equation}
From the bounds \cite{Rana2017PRA}
\begin{equation}
    \frac{C_{l_1}^2(\rho)}{2d(d-1)}
    \leq
    C_r(\rho)
    \leq
    \log[1+C_{l_1}(\rho)] ,
\end{equation}
we identify in Eq.~\eqref{eq:generic_coherence_measure_bounds},
\begin{equation}
    L_{l_1}(\rho)
    =
    \frac{C_{l_1}^2(\rho)}{2d(d-1)},
 ~~   U_{l_1}(\rho)
    =
    \log[1+C_{l_1}(\rho)] .
\end{equation}
Substituting them into Eqs.~\eqref{eq:Ec_M_lower}--\eqref{eq:Pi_c_M_induced_bounds}, we obtain the \(l_1\)-coherence-induced bounds on \(\Pi_c(\rho)\).

For the robustness of coherence \cite{Napoli2016PRL,Piani2016PRA},
\begin{align}
C_R(\rho) &:= \min_{\sigma} \left\{ s \geqslant 0 \,\middle|\, \frac{\rho + s \sigma}{1 + s} \in\mathcal{I}\right\} \nonumber\\
&= \min_{\tau \in \mathcal{I}} \{ s \geqslant 0 \,|\, \rho \leqslant (1 + s)\tau \},\nonumber
\end{align}
where $\mathcal{I}$ denotes the set of incoherent states in the energy eigenbasis, one has~\cite{Rana2017PRA}
\begin{equation}
    C_r(\rho)
    \leq
    \log\!\left[1+C_R(\rho)\right]
    \leq
    \log\!\left[1+C_{\ell_1}(\rho)\right].
    \label{eq:Cr_robustness_bound}
\end{equation}
Therefore, by taking $U_R(\rho)=\log\!\left[1+C_R(\rho)\right]$
we obtain the robustness-induced upper bound,
\begin{equation}
    \Pi_c(\rho)
    \leq
    \frac{
    \log\!\left[1+C_R(\rho)\right]
    +
    D_{\rm rel}(P_\delta\|\rho_\beta)
    }{
    \beta D_-(\rho)
    }.
    \label{eq:Pi_robustness_upper}
\end{equation}
Since \(C_R(\rho)\leq C_{\ell_1}(\rho)\), this upper bound is no looser than
the corresponding \(l_1\)-coherence-induced upper bound.

Next, consider a fidelity-based geometric coherence measure. Let
   $ \mathcal F_{\mathcal I}(\rho)
    :=
    \max_{\delta\in\mathcal I}
    \mathcal F(\rho,\delta),
    \label{eq:FI_definition}$
where $\mathcal F(\rho,\delta)=\left\|\sqrt{\rho}\sqrt{\delta}\right\|_1=\Tr\sqrt{\sqrt{\rho}\sigma\sqrt{\rho}}$
is the Uhlmann fidelity. The geometric coherence can be written as~\cite{Streltsov2017RMP,Streltsov2015PRL},
\begin{equation}
    C_g(\rho)
    =
    1-\mathcal F_{\mathcal I}^2(\rho).
    \label{eq:geometric_coherence}
\end{equation}

The quantum relative entropy satisfies the standard fidelity bound~\cite{MullerLennert2013Renyi},
\begin{equation}
    D_{\rm rel}(\rho\|\delta)
    \geq
    -2\log \mathcal F(\rho,\delta),
    \label{eq:relative_entropy_fidelity_bound}
\end{equation}
following from the monotonicity of the quantum R\'{e}nyi relative entropy with respect to its R\'{e}nyi parameter.
Since the relative entropy of coherence can be written as $C_r(\rho)=\min_{\delta\in\mathcal I} D_{\rm rel}(\rho\|\delta)$,
we obtain
\begin{equation}
    C_r(\rho)
    \geq
    -2\log \mathcal F_{\mathcal I}(\rho).
    \label{eq:Cr_fidelity_lower}
\end{equation}
From Eq.~\eqref{eq:geometric_coherence} this bound can also be written as
\begin{equation}
    C_r(\rho)
    \geq
    -\log\!\left[1-C_g(\rho)\right].
    \label{eq:Cr_Cg_lower}
\end{equation}
Substituting this lower bound ~\eqref{eq:Cr_Cg_lower} into Eq.~\eqref{eq:Ec_M_lower}, we obtain the fidelity-induced lower bound on coherent ergotropy,
\begin{equation}
\begin{aligned}
\mathcal{E}_{c,g}^-(\rho;\beta)
&:=\max\left\{0,\,\frac{-2\log \mathcal{F}_{\mathcal{I}}(\rho) - D_{\mathrm{rel}}(P_\rho\|\rho_\beta)}{\beta}\right\} \\
&=\max\left\{ 0,\,\frac{-\log[1-C_g(\rho)] - D_{\mathrm{rel}}(P_\rho\|\rho_\beta)}{\beta}\right\}.\nonumber
\end{aligned}
\label{eq:Ec_geometric_lower}
\end{equation}
Therefore, the geometric coherence induces the lower bound
\begin{equation*}
\Pi_c(\rho)\geq\frac{\mathcal{E}_{c,g}^-(\rho;\beta) }{ D_+}.
\label{eq:Pi_geometric_lower}
\end{equation*}
This bound has a direct geometric interpretation.
It relates the geometric power capacity of coherent ergotropy to the fidelity distance from the set of energy-incoherent states. In contrast to the relative-entropy-based bound, this construction emphasizes how far the state is from being incoherent in a metric sense.

Different choices of coherence measures $M$ lead to different trade-offs.
The relative entropy of coherence gives the most direct thermodynamic expression.
The $l_1$-coherence is simply evaluated and experimentally accessible.
The robustness of coherence may yield a tighter upper bound, while fidelity-based geometric coherence provides a natural geometric lower estimate. Therefore, the bounds on $\Pi_c(\rho)$ are not only determined by the amount of coherent ergotropy, but also by how the chosen coherence measure constrains the thermodynamic resource and how the quantum charging distance constrains the minimal extraction geometry. Since the actual coherent discharging power satisfies $ P_c^{\rm ext}(\rho;V_t)\leq\|V_t\|\,\Pi_c(\rho)$, the upper bounds above also provide resource-dependent upper bounds on the actual coherent discharging power. Thus, this hierarchy connects coherence quantification, ergotropic work extraction, and geometric speed limits in a unified way.

\emph{Qubit example}
\label{subsec:qubit_bounds_illustration}
We first illustrate the above hierarchy of bounds with the qubit example discussed in
Sec.~\ref{subsec:qubit_example}. For a fixed Bloch radius $R$, we parametrize the state by
\begin{equation*}
    c=C_{l_1}(\rho)=R\sin\alpha,
~~
    |z|=R\cos\alpha,
~~
    0\leq \alpha\leq \frac{\pi}{2}.
    \label{eq:qubit_alpha_param_bounds}
\end{equation*}
In this case, we have
\begin{equation}
    \mathcal{E}_c(\rho)
    =
    \frac{\epsilon R}{2}\left(1-\cos\alpha\right),
    ~~
    D_c^{\rm ext}(\rho)
    =
    \frac{\alpha}{2},
    \label{eq:qubit_exact_Ec_Dc_bounds}
\end{equation}
and hence
\begin{equation}
    \Pi_c(\rho)
    =
    \epsilon R\,
    \frac{1-\cos\alpha}{\alpha},
    \label{eq:qubit_exact_Pi_bounds}
\end{equation}
with the value at $\alpha=0$ understood by continuity.

For the relative-entropy-induced bounds, we choose the Gibbs reference
$\rho_{\beta_*}$ such that
\begin{equation}
    P_\rho=\rho_{\beta_*},
~~    \beta_*=
    \frac{1}{\epsilon}
    \log\frac{1+R}{1-R}.
    \label{eq:beta_star_qubit_bounds}
\end{equation}
With this choice, the relative-entropy upper bound on coherent ergotropy is saturated,
$\mathcal{E}_{c,r}^+(\rho;\beta_*)=\mathcal{E}_c(\rho)$,
while the corresponding lower bound remains generally non-tight.
For the $l_1$-coherence measure, since
$C_{l_1}(\rho)=c$ for a qubit, positivity alone gives the simple bounds,
\begin{equation}
    \mathcal{E}_{c,l_1}^-(\rho)
    =
    \frac{\epsilon}{2}
    \left(
    1-\sqrt{1-C_{l_1}^2(\rho)}
    \right),~~
    \mathcal{E}_{c,l_1}^+(\rho)
    =
    \frac{\epsilon}{2}C_{l_1}(\rho).
    \label{eq:qubit_l1_direct_Ec_bounds}
\end{equation}

The energetic bounds and the geometric bounds on
\(D_c^{\rm ext}(\rho)\) together give the corresponding rigorous bounds on
\(\Pi_c(\rho)\). Fig.~\ref{fig:qubit_bounds_comparison} compares the exact
quantities with the bounds induced by different coherence measures.
The left panel shows the coherent ergotropy bounds, the middle panel shows the
resulting bounds on the geometric power capacity of coherent ergotropy, and the right panel compares the corresponding bound widths, $\Delta_M^\Pi=\Pi_M^+ - \Pi_M^-$.
A smaller value of \(\Delta_M^\Pi\) indicates a tighter estimation of the geometric power capacity of coherent ergotropy.
This example shows that different coherence quantifiers lead to different rigorous bounds on the same geometric power capacity. In the qubit case, the relative entropy construction is particularly sharp when the Gibbs reference is chosen as the passive state, since the upper energetic bound becomes exact. By contrast, the \(l_1\)-coherence bounds are more experimentally accessible and easier to compute, but generally produce wider bounds for
\(\Pi_c(\rho)\). The qubit example provides a simple benchmark for comparing
the tightness and practicality of different coherence-measure-induced bounds.
\begin{figure*}[t]
    \centering
    \includegraphics[width=1.05\linewidth]{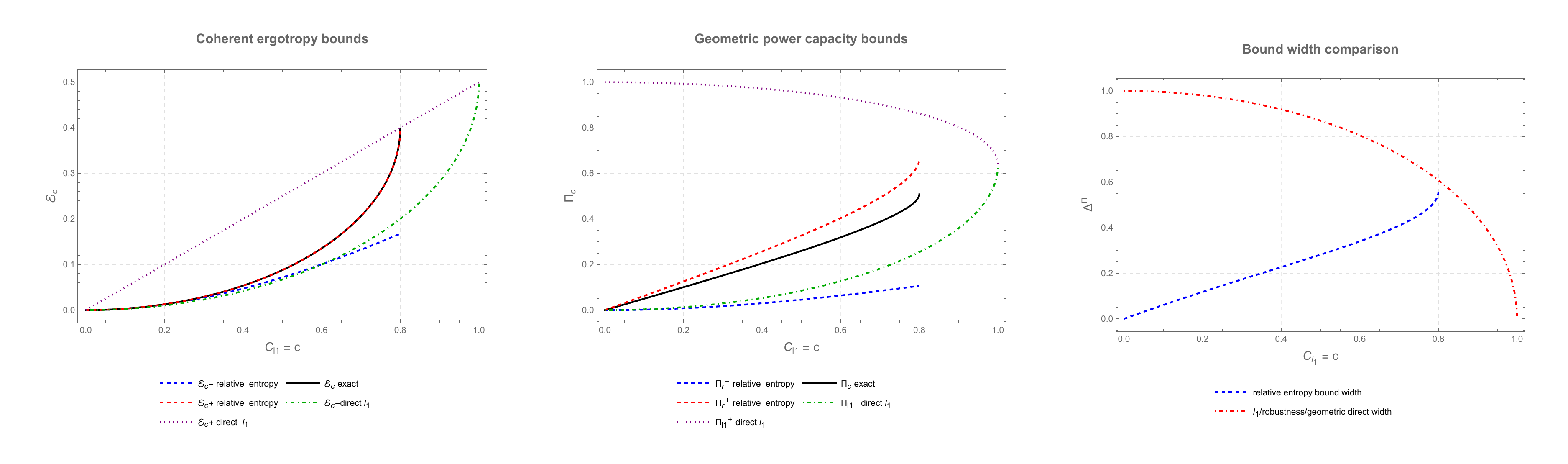}
    \caption{
    Comparison of coherence-measure-induced bounds for qubit systems.
    The state has fixed Bloch radius \(R\) and is parametrized by
    \(C_{l_1}(\rho)=c=R\sin\alpha\).
    Left: exact coherent ergotropy \(\mathcal{E}_c\) together with the
    relative-entropy-induced bounds and the \(l_1\)-coherence bounds.
    Middle: the corresponding bounds on the geometric power capacity of coherent ergotropy
    \(\Pi_c=\mathcal{E}_c/D_c^{\rm ext}\).
    Right: comparison of the bound widths
    \(\Delta_M^\Pi=\Pi_M^+-\Pi_M^-\) for different coherence estimates.
    For the Gibbs reference satisfying \(P_\rho=\rho_{\beta_*}\), the
    relative-entropy upper bound saturates the exact coherent ergotropy.
    The \(l_1\) bounds are simpler to evaluate but are generally less tight.}
    \label{fig:qubit_bounds_comparison}
\end{figure*}

\emph{Qutrit example}
\label{subsec:qutrit_bounds_illustration}We next revisit the single-coherent-block qutrit family introduced in
Sec.~\ref{subsec:qutrit_example}. Using the same notation as in
Eqs.~\eqref{eq:qutrit_state}--\eqref{eq:qutrit_eta}, we regard the
state as parametrized by the inverse temperature \(\beta\) and the normalized
coherence strength \(\eta\).

This example is particularly useful for comparing different
coherence-measure-induced bounds, because the coherence is confined to a single
two-dimensional block. In this case the exact coherent ergotropy and coherent
extraction distance are known analytically, so they can be directly compared
with the bounds derived above.

Define
\begin{equation}
    \Delta(\beta,\eta)
    =
    \sqrt{
    \frac{[g_1(\beta)-g_3(\beta)]^2}{4}
    +
    \eta^2 g_1(\beta)g_3(\beta)
    } .
    \label{eq:qutrit_delta_beta_eta_bounds}
\end{equation}
The exact coherent ergotropy and coherent extraction distance are
\begin{equation}
    \mathcal{E}_c(\rho)
    =
    2\left[
    \Delta(\beta,\eta)
    -
    \frac{g_1(\beta)-g_3(\beta)}{2}
    \right]
    \label{eq:qutrit_exact_Ec_bounds_section}
\end{equation}
and
\begin{equation}
    D_c^{\rm ext}(\rho)
    =
    \frac{1}{2}
    \arccos
    \left[
    \frac{g_1(\beta)-g_3(\beta)}
    {2\Delta(\beta,\eta)}
    \right].
    \label{eq:qutrit_exact_Dc_bounds_section}
\end{equation}
Hence, $\Pi_c(\rho) = \frac{\mathcal{E}_c(\rho)}{D_c^{\rm ext}(\rho)}$.
    
In this case,
\begin{equation}
    C_{l_1}(\rho)=2|c|
    =
    2\eta\sqrt{g_1(\beta)g_3(\beta)} .
    \label{eq:qutrit_l1_eta_relation}
\end{equation}
Once \(\beta\) is fixed, the \(l_1\)-coherence already determines the
off-diagonal amplitude \(|c|\), and therefore determines the exact value of
\(\mathcal{E}_c(\rho)\). Explicitly,
\begin{align}
\mathcal{E}_{c,l_1}(\rho)
 &= 2\left[\sqrt{\frac{[g_1(\beta)-g_3(\beta)]^2}{4}+ \frac{C_{l_1}^2(\rho)}{4} } -\frac{g_1(\beta)-g_3(\beta)}{2}\right] \nonumber \\ 
 &= \mathcal{E}_c(\rho). \nonumber
\end{align}
Thus, for this special qutrit family, the \(l_1\)-coherence estimation is
just exact. Since the coherent extraction distance is also analytically known, the resulting
\(l_1\)-induced estimate of \(\Pi_c\) also coincides with the exact geometric
power capacity.

Fig.~\ref{fig:qutrit_bounds_comparison} compares these estimates as functions
of the normalized coherence strength \(\eta\). As \(\eta\) increases, both the
coherent ergotropy and the geometric power capacity of coherent ergotropy increase. The
relative-entropy bounds provide a general certified interval for
\(\Pi_c(\rho)\), but the interval is not tight for this particular family. By
contrast, the \(l_1\)-coherence estimate captures the exact coherent
ergotropy because the state contains only one coherent block and the thermal
populations are fixed by \(\beta\).
\begin{figure*}[t]
    \centering
    \includegraphics[width=1.0\linewidth]{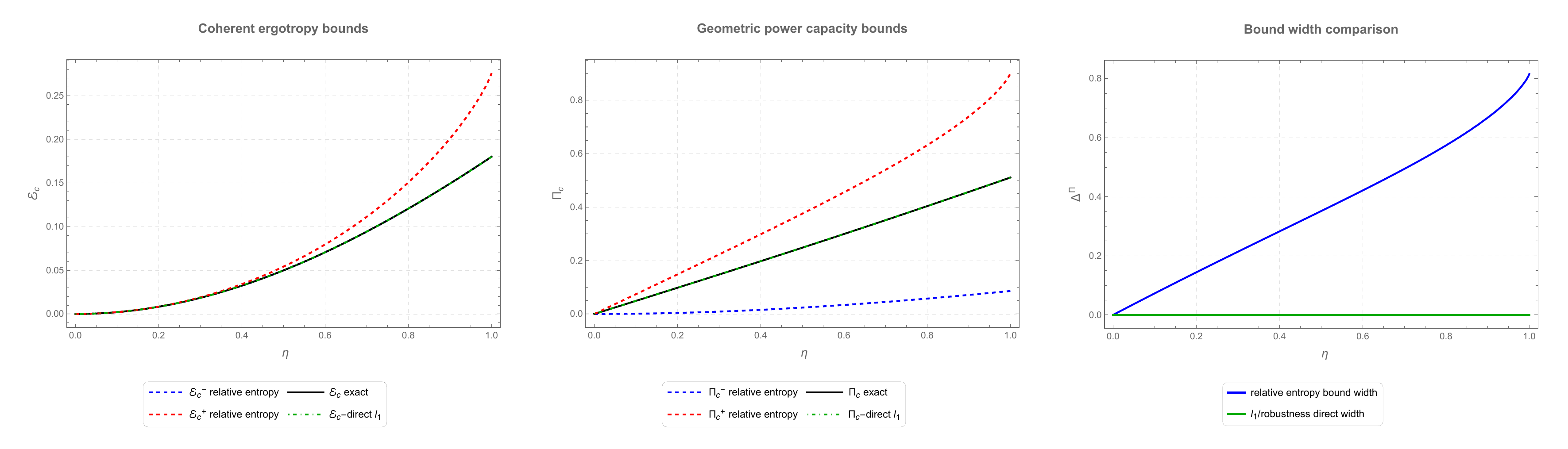}
    \caption{
    Comparison of coherence-measure-induced bounds for a single
    coherent block for the qutrit case. The state is parameterized by the normalized coherence
    strength \(\eta=|c|/\sqrt{g_1g_3}\), with fixed inverse temperature
    \(\beta\). Left: exact coherent ergotropy \(\mathcal{E}_c\) and its
    relative-entropy-induced bounds, together with the \(l_1\)-coherence
    estimate. Middle: the corresponding bounds on the geometric power capacity of coherent ergotropy
    \(\Pi_c=\mathcal{E}_c/D_c^{\rm ext}\). Right: comparison of the bound width
    \(\Delta^\Pi=\Pi^+-\Pi^-\). For this single-block qutrit family, the
    \(l_1\)-coherence estimate coincides with the exact result, while the
    relative-entropy construction gives a general but wider certified window.}
    \label{fig:qutrit_bounds_comparison}
\end{figure*}

This qutrit example therefore complements the qubit illustration. It shows that
the usefulness of a coherence-measure-induced bound depends on both the chosen
coherence measure and the structure of the state. For single-block models,
simple coherence amplitudes such as \(C_{l_1}\) can be highly effective, whereas
the relative-entropy construction remains more general but may lead to a wider
rigorous bound for the geometric power capacity of coherent ergotropy.

\section{Protocol-corrected geometric power capacity}
\label{sec:protocol_corrected_capacity}
The geometric power capacity of coherent ergotropy \(\Pi_c(\rho)\), defined in Eq.~\eqref{eq:geometric_power_capacity}, is an intrinsic capacity associated with the coherent extraction process \(\sigma_\rho\to P_\rho\).
We now establish the relations between this intrinsic quantity and the physical power generated by concrete driving protocols~\cite{Campaioli2018PRL}.

Consider a driving Hamiltonian \(V_t\) that realizes the coherent extraction
process $\sigma_\rho \longrightarrow P_\rho$
within a finite time $T$. The corresponding evolution is
\begin{equation*}
    \rho_t
    =
    U_t\sigma_\rho U_t^\dagger,
~~
    \dot U_t=-iV_tU_t,
~~
    \rho_0=\sigma_\rho,~~ \rho_T=P_\rho .
\end{equation*}
The actual coherent discharging power is then given by
Eq.~\eqref{eq:actual_coherent_power}. Since the coherent extraction distance
\(D_c^{\rm ext}(\rho)\) is the minimal unit-norm time required to connect
\(\sigma_\rho\) and \(P_\rho\), a general driving Hamiltonian naturally
introduces the effective driving speed~\cite{Ju-Yeon2024PRA,PhysRevX6021031,MT,ML},
\begin{equation}
    v_{\rm eff}[V_t]
    :=
    \frac{1}{T}
    \int_0^T \|V_t\|\,dt.
    \label{eq:v_eff}
\end{equation}
The corresponding quantum speed limit implies that
\begin{equation}
    T
    \geq
    \frac{D_c^{\rm ext}(\rho)}
    {v_{\rm eff}[V_t]} .
    \label{eq:QSL_eff_protocol}
\end{equation}
Combining this with Eq.~\eqref{eq:actual_coherent_power}, one obtains
\begin{equation}
    P_c^{\rm ext}(\rho;V_t)
    \leq
    v_{\rm eff}[V_t]\,
    \frac{\mathcal{E}_c(\rho)}
    {D_c^{\rm ext}(\rho)}
    =
    v_{\rm eff}[V_t]\,
    \Pi_c(\rho).
    \label{eq:power_bound_eff}
\end{equation}
This motivates the protocol-corrected geometric power capacity,
\begin{equation}
    \Gamma_c^{\rm eff}[\rho,V_t]
    :=
    v_{\rm eff}[V_t]\,
    \Pi_c(\rho),
    \label{eq:Gamma_eff_def}
\end{equation}
which bounds the coherent discharging power generated by the protocol \(V_t\).

Further refinement is obtained by removing the dynamically ineffective
part of the Hamiltonian. A Hermitian operator \(K_t\) satisfying
$ [K_t,\rho_t]=0$ does not contribute to the instantaneous motion of the state, since it leaves \(-i[V_t,\rho_t]\) unchanged up to this redundant component. We therefore define the transverse norm of the driving Hamiltonian by
\begin{equation}
    \|V_t\|_{\rho_t,\perp}
    :=
    \min_{K_t=K_t^\dagger,\,[K_t,\rho_t]=0}
    \|V_t-K_t\| .
    \label{eq:transverse_norm}
\end{equation}
This gives the reduced effective driving speed,
\begin{equation}
    v_{\rm eff}^{\perp}[V_t]
    :=
    \frac{1}{T}
    \int_0^T
    \|V_t\|_{\rho_t,\perp}\,dt .
    \label{eq:v_eff_perp}
\end{equation}
By construction,
\begin{equation}
    v_{\rm eff}^{\perp}[V_t]
    \leq
    v_{\rm eff}[V_t] .
\end{equation}

Using the same speed limit  bound with the transverse norm, one gets
\begin{equation}
    P_c^{\rm ext}(\rho;V_t)
    \leq
    v_{\rm eff}^{\perp}[V_t]\,
    \Pi_c(\rho).
    \label{eq:power_bound_eff_perp}
\end{equation}
Accordingly, we define the refined protocol-corrected capacity,
\begin{equation}
    \Gamma_c^{\perp}[\rho,V_t]
    :=
    v_{\rm eff}^{\perp}[V_t]\,
    \Pi_c(\rho).
    \label{eq:Gamma_perp_def}
\end{equation}
The two protocol-corrected capacities satisfy
\begin{equation}
    P_c^{\rm ext}(\rho;V_t)
    \leq
    \Gamma_c^{\perp}[\rho,V_t]
    \leq
    \Gamma_c^{\rm eff}[\rho,V_t].
    \label{eq:Gamma_hierarchy}
\end{equation}
The protocol-corrected capacities
\(\Gamma_c^{\rm eff}\) and \(\Gamma_c^{\perp}\) incorporate the effective
speed of a specified driving Hamiltonian. They therefore convert the
intrinsic resource-geometric capacity into an upper bound on the
physically achievable coherent extraction power under a given protocol.
Among these two protocol-corrected quantities, \(\Gamma_c^{\perp}\)
provides the sharper bound. The reason is that \(v_{\rm eff}^{\perp}\)
removes the components of the driving Hamiltonian that commute with the
instantaneous state and hence do not generate motion along the coherent
extraction path. These components may contribute to the operator norm of
the Hamiltonian, but not to the actual evolution of the state. Thus,
\(\Gamma_c^{\perp}\) captures the effective driving resource more
faithfully than \(\Gamma_c^{\rm eff}\).

The intrinsic capacity \(\Pi_c(\rho)\) is recovered in the ideal
unit-speed optimal-driving limit. If the coherent extraction process is
implemented by an optimal Hamiltonian with no redundant commuting
component and with normalized operator norm, then
 $v_{\rm eff}^{\perp}[V_t] = v_{\rm eff}[V_t] = 1$ .
Consequently,
$ \Gamma_c^{\perp}[\rho,V_t] =\Gamma_c^{\rm eff}[\rho,V_t] =\Pi_c(\rho).$
In this sense, \(\Pi_c(\rho)\) is the unit-speed optimal-protocol limit
of the protocol-corrected geometric power capacity.

The above construction can also be combined with the resource-geometric
bounds derived in Sec.~\ref{sect4}. By using the coherent ergotropy bounds
in Eq.~\eqref{eq:Ec_bounds_pm} and the coherent extraction distance
bounds in Eq.~\eqref{eq:Dc_bounds_pm}, the intrinsic geometric power
capacity obeys Eq.~\eqref{eq:Pi_c_bounds_abstract}. Multiplying by the
reduced effective driving speed \(v_{\rm eff}^{\perp}[V_t]\), we obtain
\begin{equation}
    v_{\rm eff}^{\perp}[V_t]\,
    \frac{\mathcal{E}_c^{-}(\rho;\beta)}
    {D_+}
    \leq
    \Gamma_c^{\perp}[\rho,V_t]
    \leq
    v_{\rm eff}^{\perp}[V_t]\,
    \frac{\mathcal{E}_c^{+}(\rho;\beta)}
    {D_-(\rho)} .
    \label{eq:Gamma_perp_resource_geometric_bounds}
\end{equation}
In particular, the actual coherent extraction power satisfies
\begin{equation}
    P_c^{\rm ext}(\rho;V_t)
    \leq
    v_{\rm eff}^{\perp}[V_t]\,
    \frac{\mathcal{E}_c^{+}(\rho;\beta)}
    {D_-(\rho)} .
    \label{eq:actual_power_resource_geometric_bound}
\end{equation}
For example, using the relative entropy coherence $C_r(\rho)$ upper bound~\eqref{eq:Ec_upper_beta},
one obtains
\begin{equation}
    P_c^{\rm ext}(\rho;V_t)
    \leq
    v_{\rm eff}^{\perp}[V_t]\,
    \frac{
    C_r(\rho)
    +
    D_{\rm rel}(P_\delta\|\rho_\beta)
    }
    {\beta\,D_-(\rho)} .
    \label{eq:power_bound_relative_entropy_speed}
\end{equation}
Moreover, the robustness of coherence $C_R(\rho)$ provides an upper bound on the $C_r(\rho)$, as shown in Eq.~\eqref{eq:Cr_robustness_bound},
\begin{equation}
    P_c^{\rm ext}(\rho;V_t)
    \leq
    v_{\rm eff}^{\perp}[V_t]\,
    \frac{
    \ln[1+C_R(\rho)]
    +
    D_{\rm rel}(P_\delta\|\rho_\beta)
    }
    {\beta\,D_-(\rho)} .
    \label{eq:power_bound_robustness_speed}
\end{equation}

These inequalities show that the coherent discharging power is jointly
constrained by three ingredients: the amount of coherence resource, the
state-space distance required to extract the coherent ergotropy, and the
effective speed of the driving protocol. The reduced speed
\(v_{\rm eff}^{\perp}[V_t]\) isolates the part of the driving Hamiltonian
that actually moves the state along the coherent extraction path, while
\(\Pi_c(\rho)\) captures the intrinsic geometric efficiency of converting
coherent ergotropy into extractable work. Therefore, the protocol-corrected
bound
$    P_c^{\rm ext}(\rho;V_t)
    \leq
    v_{\rm eff}^{\perp}[V_t]\Pi_c(\rho)$
provides an operational connection among coherent ergotropy,
charging-distance geometry, and physically available driving resources.

\section{Conclusion}
\label{sec:conclusion}

We have investigated coherent ergotropy extraction in quantum batteries from a resource-geometric perspective.
We introduced the geometric power capacity of coherent ergotropy, $\Pi_c(\rho)$, which measures the amount of coherent ergotropy extractable unit minimal coherent extraction distance. This quantity combines the coherent work resource $\mathcal{E}_c(\rho)$ with the geometric cost $D_c^{\rm ext}(\rho)$ required to extract it.

We have shown that $\Pi_c(\rho)$ provides a natural upper bound on the actual coherent discharging power under the unit driving convention. More generally, when the driving Hamiltonian has a bounded operator norm, the actual coherent extraction power is bounded by the product of the driving strength and $\Pi_c(\rho)$. Thus, $\Pi_c(\rho)$ should be understood as a capacity under unit driving norm rather than the power of a particular protocol. Once the driving strength or the effective speed of a concrete protocol is specified, this intrinsic quantity acquires an operational meaning.

We have derived general bounds on $\Pi_c(\rho)$ by combining the bounds on coherent ergotropy with the geometric bounds on the coherent extraction distance. On the resource side, coherent ergotropy can be bounded by the relative entropy of coherence and by other coherence measures. On the geometric side, the coherent extraction distance is a special instance of the quantum charging distance and therefore satisfies general distance bounds. This provides a practical way to estimate the geometric power capacity of coherent ergotropy even when the exact coherent ergotropy or the exact extraction distance is difficult to compute.

We have illustrated the framework with analytically tractable qubit and qutrit examples.
For qubit case, the coherent extraction process admits a Bloch-sphere interpretation, where the coherent ergotropy is related to the energetic displacement from the passive direction, and the extraction distance is determined by the angular separation between the active and passive states. For qutrits with a single coherent block, the coherent extraction process effectively reduces to a two-level rotation in a higher-dimensional system. These examples show that coherent ergotropy and coherent extraction distance capture distinct aspects of the discharging process, and that their ratio $\Pi_c(\rho)$ reflects a genuine resource-geometric feature.

We have also formulated the coherence measure induced bounds and protocol-corrected geometric power capacities.
The former connects different coherence monotones, such as relative entropy coherence, $l_1$-coherence, robustness of coherence and fidelity-based geometric coherence, with operational estimates of coherent discharging performance.
The latter relates the capacity $\Pi_c(\rho)$ to concrete driving protocols by incorporating the effective speed of the applied Hamiltonian and removing dynamically irrelevant Hamiltonian components.

Overall, our results show that coherent ergotropic power is jointly constrained by the amount of coherent work stored in the state and the minimal geometric distance required to release it. The geometric power capacity of coherent ergotropy $\Pi_c(\rho)$ combines these two ingredients into a single figure of merit, providing a useful tool for analyzing, bounding, and comparing coherent discharging processes in quantum batteries. Future work may extend this approach to open system dynamics, many-body batteries and experimentally constrained control protocols, where coherence, geometry and driving speed are expected to be strongly intertwined.

\section*{ACKNOWLEDGEMENTS}
This work is supported by the National Natural Science Foundation of China (NSFC) under Grant No. 12171044, the Natural Science Foundation of Jilin Province, China (Grant No. 20250102018JC), and the Specific Research Fund of the Innovation Platform for Academicians of Hainan Province.

\section{Appendices}

\subsection{Derivation of the qubit example}\label{app:qubit_derivation}

In this appendix, we derive the closed expressions used in Sec.~\ref{subsec:qubit_example}.
Consider the qubit Hamiltonian $ H_0=\epsilon\ket{1}\bra{1}$.
We use the Bloch representation $\rho=\frac{1}{2}\left(I+x\sigma_x+y\sigma_y+z\sigma_z\right).$
Define $c=\sqrt{x^2+y^2}$ and $R=\sqrt{c^2+z^2}$. The eigenvalues of $\rho$ are $ r_{\pm} =\frac{1\pm R}{2}.$

Since the passive state is obtained by placing the larger eigenvalue on the lower energy level, so $P_\rho =\frac{1}{2}\left(I+R\sigma_z\right)$. Therefore, the Bloch vector of $P_\rho$ is $\mathbf{r}_{P_\rho}=(0,0,R)$.

The state $\sigma_\rho$ is obtained from $\rho$ by the optimal incoherent unitary operation.
This operation preserves the coherence amplitude and rearranges the populations so that the $z$ component becomes non-negative.
Up to an irrelevant phase rotation around the energy axis, we may write
$\sigma_\rho = \frac{1}{2} \left( I+c\sigma_x+|z|\sigma_z \right)$,
with Bloch vector $\mathbf{r}_{\sigma_\rho}  = (c,0,|z|)$.
It follows immediately that $ |\mathbf{r}_{\sigma_\rho}|= |\mathbf{r}_{P_\rho}| = R$,
so $\sigma_\rho$ and $P_\rho$ have the same spectrum.

We next compute the coherent ergotropy. 
Using $H_0=\epsilon\ket{1}\bra{1}=\frac{\epsilon}{2}(I-\sigma_z),$
one has, for any qubit state with Bloch $z$ component $z_s$,
$\Tr(H_0\rho_s)= \frac{\epsilon}{2}(1-z_s)$.
The $z$ component is $|z|$ for $\sigma_\rho$, and $R$ for $P_\rho$.
Therefore,
\begin{align}
    \mathcal{E}_c(\rho)
    &=
    \Tr\!\left[
    H_0(\sigma_\rho-P_\rho)
    \right] \nonumber\\
    &=
    \frac{\epsilon}{2}
    \left[
    (1-|z|)-(1-R)
    \right] \nonumber\\
    &=
    \frac{\epsilon}{2}
    \left(
    R-|z|
    \right).
    \label{eq:app_Ec}
\end{align}
Equivalently, since $\Tr\rho^2 =\frac{1+R^2}{2}$,
we have $ 2\Tr\rho^2-1=R^2$.
Moreover, the $l_1$-coherence of $\rho$ in the energy basis is
$ C_{l_1}(\rho) = 2|\rho_{01}| = c$.
Hence $|z|=\sqrt{R^2-c^2}=\sqrt{2\Tr\rho^2-1-C_{l_1}^2(\rho)}$,
and Eq.~\eqref{eq:app_Ec} can also be written as
\begin{equation*}
    \mathcal{E}_c(\rho)
    =
    \frac{\epsilon}{2}
    \left[
    \sqrt{2\Tr\rho^2-1}
    -
    \sqrt{
    2\Tr\rho^2-1-C_{l_1}^2(\rho)
    }
    \right].
    \label{eq:app_Ec_coherence}
\end{equation*}

We now derive the coherent ergotropy extraction distance.
For two isospectral qubit states with Bloch vectors $R\mathbf{n}$ and $R\mathbf{m}$, $D(\rho,\sigma)= \frac{1}{2}\arccos(\mathbf{n}\cdot\mathbf{m})$.
In our case, $ \mathbf{n}_{\sigma_\rho}=\frac{1}{R}(c,0,|z|)$ and
$\mathbf{n}_{P_\rho}=(0,0,1)$.
Thus, $\mathbf{n}_{\sigma_\rho}\cdot\mathbf{n}_{P_\rho}=\frac{|z|}{R}$.
Therefore,
\begin{equation}
    D_c^{\rm ext}(\rho)
    =
    D(\sigma_\rho,P_\rho)
    =
    \frac{1}{2}
    \arccos\!\left(
    \frac{|z|}{R}
    \right).
    \label{eq:app_Dc}
\end{equation}
Finally, the geometric power capacity of coherent ergotropy is obtained by
\begin{align*}
\Pi_c(\rho)
=\frac{\mathcal{E}_c(\rho)} {D_c^{\rm ext}(\rho)}
=\frac{\frac{\epsilon}{2}(R-|z|)}{\frac{1}{2}\arccos(|z|/R) }
=\epsilon\frac{ R-|z|} {\arccos(|z|/R) }.\label{eq:app_Pi}
\end{align*}
For $c=0$, one has $R=|z|$, and therefore $\mathcal{E}_c(\rho)=0$ and $D_c^{\rm ext}(\rho)=0$.
In this incoherent limit, we define $\Pi_c(\rho)=0$ by continuity.
Indeed, for fixed $|z|>0$ and $c\rightarrow 0$, one has
$\frac{R-|z| }{\arccos(|z|/R) }\rightarrow 0$.

This completes the derivations of Eqs.~\eqref{eq:qubit_Ec_main}--\eqref{eq:qubit_Pi_main}.

\subsection{Derivations of the qutrit example}\label{app:qutrit_example}
We derive the expressions used in Sec.~\ref{subsec:qutrit_example}.
Consider the qutrit Hamiltonian,
\begin{equation*}
    H_0
    =
    \operatorname{diag}(\epsilon_1,\epsilon_2,\epsilon_3),
    ~~
    \epsilon_1<\epsilon_2<\epsilon_3,
\end{equation*}
and the state
\begin{equation}
    \rho
    =
    \begin{pmatrix}
        g_1 & c & 0\\
        c^* & g_3 & 0\\
        0 & 0 & g_2
    \end{pmatrix},
    ~~
    g_i=
    \frac{e^{-\beta\epsilon_i}}
    {\sum_j e^{-\beta\epsilon_j}},
    ~~
    |c|\leq\sqrt{g_1g_3}.
    \label{eq:app_qutrit_rho}
\end{equation}
For $\beta>0$, the  populations satisfy $ g_1\geq g_2\geq g_3$.

The state $\rho$ contains a single nontrivial coherent block,
\begin{equation*}
    B
    =
    \begin{pmatrix}
        g_1 & c\\
        c^* & g_3
    \end{pmatrix}.
\end{equation*}
The two eigenvalues of this block are
\begin{equation}
\lambda_\pm=\frac{g_1+g_3}{2} \pm \Delta,\label{eq:app_qutrit_lambda}
\end{equation}
where 
\begin{equation}
\Delta = \sqrt{\frac{(g_1-g_3)^2}{4} + |c|^2 } .\label{eq:app_qutrit_Delta}
\end{equation}
The third eigenvalue is $g_2$.

Since $\lambda_+\geq g_2\geq \lambda_-$,
the eigenvalues of $\rho$ in decreasing order are
$r_1=\lambda_+$, $r_2=g_2$ and $r_3=\lambda_-$.
Therefore, the passive state $P_\rho$ is obtained by assigning these eigenvalues to increasing energy levels,
\begin{equation*}
    P_\rho
    =
    \lambda_+
    \ket{\epsilon_1}\bra{\epsilon_1}
    +
    g_2
    \ket{\epsilon_2}\bra{\epsilon_2}
    +
    \lambda_-
    \ket{\epsilon_3}\bra{\epsilon_3}.
\end{equation*}
Equivalently, $P_\rho =\operatorname{diag}(\lambda_+,g_2,\lambda_-)$.

The diagonal populations of $\rho$ are$ (g_1,g_3,g_2)$,
which are not in passive order.
The optimal incoherent unitary therefore exchanges the second and third energy levels, rearranging the populations into $ (g_1,g_2,g_3)$.
Under the same incoherent permutation, the coherence originally between levels $1$ and $2$ is moved to the subspace spanned by levels $1$ and $3$.
Thus, after extracting the incoherent ergotropy the state is
\begin{equation}
    \sigma_\rho
    =
    \begin{pmatrix}
        g_1 & 0 & c\\
        0 & g_2 & 0\\
        c^* & 0 & g_3
    \end{pmatrix}.
    \label{eq:app_qutrit_sigma}
\end{equation}
This state is unitarily connected to $\rho$, and hence has the same spectrum as $\rho$ and $P_\rho$.
The incoherent ergotropy extracted in the first process is
$\mathcal{E}_i(\rho) = \Tr[H_0(\rho-\sigma_\rho)]=(\epsilon_3-\epsilon_2)(g_2-g_3).$ 

The coherent ergotropy is the work extracted during the second process,
$\sigma_\rho \longrightarrow P_\rho$ ,
namely, $ \mathcal{E}_c(\rho)=\Tr[H_0(\sigma_\rho-P_\rho)].\label{eq:app_qutrit_Ec_def}$ 
Using Eqs.~\eqref{eq:app_qutrit_passive} and \eqref{eq:app_qutrit_sigma}, we obtain
\begin{align}
    \mathcal{E}_c(\rho)
    &=
    \epsilon_1(g_1-\lambda_+)
    +
    \epsilon_2(g_2-g_2)
    +
    \epsilon_3(g_3-\lambda_-)
    \nonumber\\
    &=
    \epsilon_1(g_1-\lambda_+)
    +
    \epsilon_3(g_3-\lambda_-).
    \label{eq:app_qutrit_Ec_step}
\end{align}
Since $ \lambda_+=\frac{g_1+g_3}{2}+\Delta$, $\lambda_-= \frac{g_1+g_3}{2}-\Delta $,
we have $ g_1-\lambda_+=\frac{g_1-g_3}{2}-\Delta $,
and $ g_3-\lambda_-=\Delta-\frac{g_1-g_3}{2} $.
Therefore,
\begin{align}
    \mathcal{E}_c(\rho)
    &=
    \epsilon_1
    \left(
    \frac{g_1-g_3}{2}-\Delta
    \right)
    +
    \epsilon_3
    \left(
    \Delta-\frac{g_1-g_3}{2}
    \right)
    \nonumber\\
    &=
    (\epsilon_3-\epsilon_1)
    \left[
    \Delta-\frac{g_1-g_3}{2}
    \right].
    \label{eq:app_qutrit_Ec}
\end{align}
This expression shows explicitly that $ \mathcal{E}_c(\rho)=0$ when $c=0$,
and that the coherent ergotropy increases with the magnitude of the coherent block.

We now compute $D_c^{\rm ext}(\rho) = D(\sigma_\rho,P_\rho)$. Although the full system is three-dimensional, the transformation from $\sigma_\rho$ to $P_\rho$ only acts nontrivially on the subspace $\operatorname{span}\{\ket{\epsilon_1},\ket{\epsilon_3}\}$.
The level $\ket{\epsilon_2}$ remains unchanged.
Write $ c=|c|e^{i\phi}$. The nontrivial $2\times2$ block of $\sigma_\rho$ is
\begin{equation}
    B_{13}
    =
    \begin{pmatrix}
        g_1 & c\\
        c^* & g_3
    \end{pmatrix}.
\end{equation}
It can be diagonalized by a unitary rotation in the $\{1,3\}$ subspace.
Let the mixing angle be $\theta$. Then $ \tan 2\theta =\frac{2|c|}{g_1-g_3}.\label{eq:app_qutrit_tan}$
Equivalently, $ \cos 2\theta=\frac{g_1-g_3}{2\Delta}.\label{eq:app_qutrit_cos}$
Thus,
\begin{equation}
    \theta
    =
    \frac{1}{2}
    \arccos
    \left(
    \frac{g_1-g_3}{2\Delta}
    \right).
    \label{eq:app_qutrit_theta}
\end{equation}
The optimal unitary connecting $\sigma_\rho$ to $P_\rho$ has nontrivial eigenvalues $e^{\pm i\theta}$ in the $\{1,3\}$ subspace and acts as the identity on $\ket{\epsilon_2}$.
Therefore, $  \|i\ln U_{\rm opt}\|=\theta$.
By the unitary expression of the quantum charging distance,
$ D_c^{\rm ext}(\rho) = D(\sigma_\rho,P_\rho) = \theta$.

Hence,
\begin{equation}
    D_c^{\rm ext}(\rho)
    =
    \frac{1}{2}
    \arccos
    \left(
    \frac{g_1-g_3}{2\Delta}
    \right).
    \label{eq:app_qutrit_Dc}
\end{equation}
Using Eq.~\eqref{eq:app_qutrit_Delta}, we also obtain
\begin{equation}
    D_c^{\rm ext}(\rho)
    =
    \frac{1}{2}
    \arccos
    \left(
    \frac{g_1-g_3}
    {
    \sqrt{(g_1-g_3)^2+4|c|^2}
    }
    \right).
    \label{eq:app_qutrit_Dc_explicit}
\end{equation}
The geometric power capacity of coherent ergotropy is defined as
$\Pi_c(\rho)=\frac{\mathcal{E}_c(\rho)}{D_c^{\rm ext}(\rho)}$.
Combining Eqs.~\eqref{eq:app_qutrit_Ec} and \eqref{eq:app_qutrit_Dc}, one obtains
\begin{equation}
    \Pi_c(\rho)
    =
    \frac{
    (\epsilon_3-\epsilon_1)
    \left[
    \Delta-\frac{g_1-g_3}{2}
    \right]
    }
    {
    \frac{1}{2}
    \arccos
    \left(
    \frac{g_1-g_3}{2\Delta}
    \right)
    }.
    \label{eq:app_qutrit_Pic}
\end{equation}
If the actual driving Hamiltonian has operator norm bounded by $\|V_t\|\leq \nu$, then the actual coherent discharging power satisfies
$P_c^{\rm ext}(\rho;V_t)\leq \nu\,\Pi_c(\rho)$.
Thus, $\Pi_c(\rho)$ gives the maximal coherent ergotropic power capacity per unit driving strength.

Define the normalized coherence parameter $ \eta =
\frac{|c|}{\sqrt{g_1g_3}}$, $0\leq\eta\leq 1.$ 
The positivity constraint $|c|\leq\sqrt{g_1g_3}$ is then automatically satisfied.
Equivalently, $ |c|=\eta\sqrt{g_1g_3}$.
Using the  populations
$g_i(\beta)= \frac{e^{-\beta\epsilon_i}} {\sum_j e^{-\beta\epsilon_j}}$,
we may write
\begin{equation}
    \Delta(\beta,\eta)
    =
    \sqrt{
    \frac{
    [g_1(\beta)-g_3(\beta)]^2
    }{4}
    +
    \eta^2 g_1(\beta)g_3(\beta)
    }.
\end{equation}
Therefore,
\begin{equation}
    \mathcal{E}_c(\beta,\eta)
    =
    (\epsilon_3-\epsilon_1)
    \left[
    \Delta(\beta,\eta)
    -
    \frac{g_1(\beta)-g_3(\beta)}{2}
    \right],
\end{equation}
\begin{equation}
    D_c^{\rm ext}(\beta,\eta)
    =
    \frac{1}{2}
    \arccos
    \left[
    \frac{
    g_1(\beta)-g_3(\beta)
    }
    {
    2\Delta(\beta,\eta)
    }
    \right],
\end{equation}
and
\begin{equation}
    \Pi_c(\beta,\eta)
    =
    \frac{
    \mathcal{E}_c(\beta,\eta)
    }
    {
    D_c^{\rm ext}(\beta,\eta)
    }.
\end{equation}
This parametrization makes the role of coherence transparent: $\eta=0$ corresponds to an incoherent state, while $\eta=1$ corresponds to the largest coherence allowed by positivity.

\subsection{Derivations of the three-level coherent discharging example}
\label{app:three_level_coherent_discharge}

We derive the expressions used in Sec.~\ref{subsec:three_level_coherent_discharge}.
The Hamiltonian is $ H_0=-h\ket{-h}\bra{-h}+ h\ket{h}\bra{h}$,
with basis ordering $ \{\ket{-h},\ket{0},\ket{h}\}$.
The passive Gibbs state is $$\rho_p =\frac{e^{-\beta H_0}}{Z}= \operatorname{diag}(a,b,d),$$
where $a=\frac{e^{\beta h}}{Z}$, $b=\frac{1}{Z}$, $d=\frac{e^{-\beta h}}{Z}$, and $Z=1+2\cosh(\beta h)$.
For $\beta>0$, one has $a\geq b\geq d$, so $\rho_p$ is passive.

We introduce the operator $ X=\ket{h}\bra{-h}+\ket{-h}\bra{h}$.
It acts nontrivially only in the subspace $\operatorname{span}\{\ket{-h},\ket{h}\}$
and satisfies $\|X\|=1 $.
The coherent family is defined by
\begin{equation}
    \rho_\theta
    =
    e^{-i\theta X}\rho_p e^{i\theta X}.
\end{equation}
Since $X$ acts as a Pauli operator on the subspace
$\mathcal{H}_{13}=\operatorname{span}\{\ket{-h},\ket{h}\}$ and vanishes on $\ket{0}$, one has
$$
X^2=I_{13},
\qquad
 I_{13}=\ket{-h}\bra{-h}+\ket{h}\bra{h}.
$$
Therefore, the unitary operator can be written as
$$ e^{-i\theta X} =
\ket{0}\bra{0} +\cos\theta\, I_{13} - i\sin\theta\, X .
 \label{eq:exp_theta_X}
$$
A direct calculation gives
\begin{equation}
    \rho_\theta
    =
    \begin{pmatrix}
        a\cos^2\theta+d\sin^2\theta & 0 & i(a-d)\sin\theta\cos\theta\\
        0 & b & 0\\
        -i(a-d)\sin\theta\cos\theta & 0 & a\sin^2\theta+d\cos^2\theta
    \end{pmatrix}.
    \label{eq:app_rho_theta_matrix}
\end{equation}
The energy-basis coherence is therefore
\begin{equation*}
    |\rho_{\theta,13}|
    =
    (a-d)\sin\theta\cos\theta
    =
    \frac{2\sinh(\beta h)}
    {1+2\cosh(\beta h)}
    \sin\theta\cos\theta .
    \label{eq:app_coherence_theta}
\end{equation*}

Since $\rho_\theta$ is unitarily connected to $\rho_p$, the two states have the same spectrum.
Because $\rho_p$ is passive, the passive state associated with $\rho_\theta$ is
$P_{\rho_\theta}=\rho_p$. Therefore, the extractable work from $\rho_\theta$ is precisely its total ergotropy,
$$
    \mathcal{E}(\rho_\theta)
    =
    \Tr[H_0(\rho_\theta-\rho_p)].
$$
Using Eq.~\eqref{eq:app_rho_theta_matrix}, we have the energy of $\rho_\theta$,
\begin{align*}
    \Tr(H_0\rho_\theta)
    &=
    -h\left(a\cos^2\theta+d\sin^2\theta\right)
    +
    h\left(a\sin^2\theta+d\cos^2\theta\right)
    \nonumber\\
    &=
    -h(a-d)\cos 2\theta.
\end{align*}
Meanwhile, $\Tr(H_0\rho_p) = -ha+hd= -h(a-d).$
Therefore,
\begin{align}
    \mathcal{E}(\rho_\theta)
    &=
    h(a-d)(1-\cos2\theta)
    2h(a-d)\sin^2\theta
    \nonumber\\
    &=
    \frac{4h\sinh(\beta h)}
    {1+2\cosh(\beta h)}
    \sin^2\theta.
    \label{eq:app_E_theta}
\end{align}
The dephased state of $\rho_\theta$ in the energy basis is
\begin{equation*}
    \Delta[\rho_\theta]
    =\operatorname{diag}
    \left(a\cos^2\theta+d\sin^2\theta,b,a\sin^2\theta+d\cos^2\theta\right).
\end{equation*}

For the total ergotropy $\mathcal{E}(\rho_\theta)$ to be purely coherent, the dephased state must remain passive.
This requires that $a\cos^2\theta+d\sin^2\theta\geq b\geq a\sin^2\theta+d\cos^2\theta$.
The second inequality is the more restrictive one, $a\sin^2\theta+d\cos^2\theta\leq b.$
Equivalently,
$ d+(a-d)\sin^2\theta\leq b$,
so that
\begin{equation*}
    \sin^2\theta
    \leq
    \frac{b-d}{a-d}.
\end{equation*}
Substituting $a=\frac{e^{\beta h}}{Z}$,
$b=\frac{1}{Z}$, $d=\frac{e^{-\beta h}}{Z},$
we obtain $\frac{b-d}{a-d}=\frac{1}{e^{\beta h}+1}$.
Thus, the coherent-only condition is
\begin{equation*}
    \sin^2\theta
    \leq
    \frac{1}{e^{\beta h}+1}.
\end{equation*}

Defining $ \theta_c(\beta) =\arcsin\left[\frac{1}{\sqrt{e^{\beta h}+1}}\right]$,
we have, for $0\leq \theta\leq \theta_c(\beta)$,
that the incoherent ergotropy vanishes $\mathcal{E}_i(\rho_\theta)=0$,
and hence $\mathcal{E}(\rho_\theta) =\mathcal{E}_c(\rho_\theta).$
Therefore, in this coherent-only region,
\begin{equation}
    \mathcal{E}_c(\rho_\theta)
    =
    \frac{4h\sinh(\beta h)}
    {1+2\cosh(\beta h)}
    \sin^2\theta.
    \label{eq:app_Ec_theta}
\end{equation}
In the coherent-only region, the incoherent ergotropy vanishes, and the active state for coherent extraction is simply $\sigma_{\rho_\theta} =\rho_\theta$.
The passive state is
$ P_{\rho_\theta} = \rho_p$.
Therefore, $D_c^{\rm ext}(\rho_\theta) = D(\rho_\theta,\rho_p)$.
Since $\rho_\theta=e^{-i\theta X}\rho_p e^{i\theta X}$ with $\|X\|=1$, the unit-norm driving Hamiltonian $X$ implements the transformation in time $\theta$.
The optimal unitary has nontrivial eigenvalues $e^{\pm i\theta}$, and hence
$\|i\ln U_{\rm opt}\|=\theta$. Thus, 
\begin{equation}
D_c^{\rm ext}(\rho_\theta) = \theta. 
 \label{eq:app_D_theta}
\end{equation}

Combining Eqs.~\eqref{eq:app_Ec_theta} and \eqref{eq:app_D_theta}, we obtain
\begin{equation}
    \Pi_c(\rho_\theta)
    =
    \frac{\mathcal{E}_c(\rho_\theta)}
    {D_c^{\rm ext}(\rho_\theta)}
    =
    \frac{4h\sinh(\beta h)}
    {1+2\cosh(\beta h)}
    \frac{\sin^2\theta}{\theta},
    \label{eq:app_Pi_theta}
\end{equation}
valid in the coherent-only region $0\leq \theta\leq \theta_c(\beta)$.
At $\theta=0$, the numerator and denominator both vanish, and we define
$\Pi_c(\rho_0)=0$ by continuity.
If the actual driving Hamiltonian has operator norm bounded by $\|V_t\|\leq \nu$, then the actual coherent discharging power satisfies
$ P_c^{\rm ext}(\rho_\theta;V_t) \leq
    \nu\,\Pi_c(\rho_\theta)$ .
The bound is saturated by the optimal driving Hamiltonian proportional to $X$.

\bigskip
\begin{thebibliography}{57}


\bibitem{Alicki2013PRE}  R. Alicki and M. Fannes, Entanglement boost for extractable work from ensembles of quantum batteries, 
 \href{https://doi.org/10.1103/PhysRevE.87.042123}{\textit{Phys. Rev. E } \textbf{87}, 042123 (2013).}

\bibitem{Andolina2019PRB} G. M. Andolina, M. Keck, A. Mari, V. Giovannetti, and M.
Polini, Quantum versus classical many-body batteries,
\href{https://doi.org/10.1103/PhysRevB.99.205437}{\textit{Phys. Rev. B } \textbf{99},  205437 (2019).}

\bibitem{Andolina2019PRL} G. M. Andolina, M. Keck, A. Mari, M. Campisi, V. Giovannetti, and M. Polini, Extractable Work, the Role of Correlations, and Asymptotic Freedom in Quantum Batteries,
\href{https://doi.org/10.1103/PhysRevLett.122.047702}{\textit{Phys. Rev. Lett.} \textbf{ 122}, 047702 (2019).}

\bibitem{GyhmFischer2024} J.-Y. Gyhm and U. R. Fischer, Beneficial and detrimental entanglement for quantum battery charging,
\href{https://doi.org/10.1116/5.0184903}{\textit{AVS Quantum Sci.} \textbf{6}, 012001 (2024).}

\bibitem{Ferraro2018}D. Ferraro, M. Campisi, G. M. Andolina, V. Pellegrini, and M.
Polini, High-Power Collective Charging of a Solid-State Quantum Battery
\href{https://doi.org/10.1103/PhysRevLett.120.117702}{\textit{Phys. Rev. Lett.}, \textbf{120}, 117702 (2018) }

\bibitem{Binder2015} F. C. Binder, S. Vinjanampathy, K. Modi, and J. Goold,
Quantacell: powerful charging of quantum batteries,
\href{https://doi.org/10.1088/1367-2630/17/7/075015}{\textit{New J.Phys. } \textbf{17},075015  (2015).}


\bibitem{Le2018} T. P. Le, J. Levinsen, K. Modi, M. M. Parish, and F. A. Pollock,
Spin-chain model of a many-body quantum battery, \href{https://doi.org/10.1103/PhysRevA.97.022106}{\textit{Phys. Rev. A } \textbf{97},   022106 (2018).}


\bibitem{Andolina2018} G. M. Andolina, D. Farina, A. Mari, V. Pellegrini, V.
Giovannetti, and M. Polini, Charger-mediated energy transfer in exactly solvable models for quantum batteries,
\href{https://doi.org/10.1103/PhysRevB.98.205423}{\textit{Phys. Rev. B } \textbf{98},  205423 (2018).}

\bibitem{Ghosh2020} S. Ghosh, T. Chanda, and A. Sen(De),
Enhancement in the performance of a quantum battery by ordered and disordered interactions, \href{https://doi.org/10.1103/PhysRevA.101.032115}{\textit{ Phys. Rev. A } \textbf{101}, 032115 (2020).}

\bibitem{GarciaPintos2020} L. P. Garc\'{i}a-Pintos, A. Hamma, and A. del Campo, Fluctuations in Extractable Work Bound the Charging Power of Quantum Batteries,
\href{https://doi.org/10.1103/PhysRevLett.125.040601}{\textit{Phys. Rev. Lett.} \textbf{125}, 040601 (2020).}

\bibitem{Crescente2020} A. Crescente, M. Carrega, M. Sassetti, and D. Ferraro,
Charging and energy fluctuations of a driven quantum battery, 
\href{https://doi.org/10.1088/1367-2630/ab91fc}{\textit{New J. Phys. } \textbf{ 22},   063057 (2020).}

\bibitem{Gherardini2020} S. Gherardini, F. Campaioli, F. Caruso, and F. C. Binder, Stabilizing open quantum batteries by sequential measurements,
\href{https://doi.org/10.1103/PhysRevResearch.2.013095}{\textit{ Phy. Rev. Research} \textbf{2}, 013095 (2020).}


\bibitem{Santos2019} A. C. Santos, B. \c{C}akmak, S. Campbell, and N. T. Zinner, 
Stable adiabatic quantum batteries,
\href{https://doi.org/10.1103/PhysRevE.100.032107}{\textit{Phy. Rev. E } \textbf{100}, 032107 (2019).}


\bibitem{Quach2020} J. Q. Quach and W. J. Munro,
Using Dark States to Charge and Stabilize Open Quantum Batteries,
\href{https://doi.org/10.1103/PhysRevApplied.14.024092}{\textit{Phys. Rev. Appl. } \textbf{14}, 024092 (2020).}


\bibitem{Lu2025PRL}
Z.-G. Lu, G. Tian, X.-Y. L{\"u}, and C. Shang,
Topological quantum batteries,
\href{https://doi.org/10.1103/PhysRevLett.134.180401}{\textit{Phys. Rev. Lett. } \textbf{134}, 180401 (2025).}



\bibitem{ergotropyAllahverdyan2004Europhys}
A. E. Allahverdyan, R. Balian, and T. M. Nieuwenhuizen, Maximal work extraction from finite quantum systems,
\href{https://doi.org/10.1209/epl/i2004-10101-2}{\textit{Europhys. Lett.}\textbf{67}, 565 (2004).}





\bibitem{Farina2019}D. Farina, G. M. Andolina, A. Mari, M. Polini, and V.
Giovannetti, Charger-mediated energy transfer for quantum
batteries: An open-system approach, 
\href{https://doi.org/10.1103/PhysRevB.99.035421}{\textit{Phys. Rev. B } \textbf{99},  035421  (2019).}



\bibitem{Barra2019PRL} F. Barra, Dissipative charging of a quantum battery, 
\href{https://doi.org/10.1103/PhysRevLett.122.210601}{\textit{Phys. Rev. Lett.} \textbf{122}, 210601 (2019). }


\bibitem{Barra2022NJP} F. Barra, K. V. Hovhannisyan, and A. Imparato, Quantum
batteries at the verge of a phase transition,
\href{https://doi.org/10.1088/1367-2630/ac43ed}{\textit{New J. Phys.} \textbf{24}, 015003 (2022).}


\bibitem{SongWLPRL} W.-L. Song, H.-B. Liu, B. Zhou, W.-L. Yang, and J.-H. An,
Remote charging and degradation suppression for the
quantum battery, \href{https://doi.org/10.1103/PhysRevLett.}{\textit{Phys. Rev. Lett.} \textbf{ 132}, 090401 (2024).}

\bibitem{Campaioli2024RMP}F. Campaioli, S. Gherardini, J. Q. Quach, M. Polini, and
G. M. Andolina, Colloquium: Quantum batteries,
\href{https://doi.org/10.1103/RevModPhys.96.031001}{\textit{Rev. Mod. Phys.} \textbf{96}, 031001 (2024).}


\bibitem{Monsel2020PRL} J. Monsel, M. Fellous-Asiani, B. Huard, and A. Auff`eves,
The energetic cost of work extraction, 
\href{https://doi.org/10.1103/PhysRevLett.124.130601}{\textit{Phys. Rev. Lett.} \textbf{124}, 130601 (2020). }


\bibitem{Tirone2021PRL} S. Tirone, R. Salvia, and V. Giovannetti, Quantum energy
lines and the optimal output ergotropy problem, 
\href{https://doi.org/10.1103/PhysRevLett.127.210601}{\textit{Phys. Rev. Lett.} \textbf{127},  210601 (2021). }


\bibitem{GelbwaserKlimovsky2013} D. Gelbwaser-Klimovsky, R. Alicki, and G. Kurizki, Work and energy gain of heat-pumped quantized amplifiers,
\href{https://doi.org/10.1209/0295-5075/103/60005}{\textit{Europhys. Lett.}\textbf{103}, 60005 (2013). }


\bibitem{Niedenzu2016NJP} W. Niedenzu, D. Gelbwaser-Klimovsky, A. G. Kofman, and
G. Kurizki, On the operation of machines powered by
quantum non-thermal baths, 
\href{https://doi.org/10.1088/1367-2630/18/8/083012}{\textit{New J. Phys.}  \textbf{18}, 083012 (2016). }





\bibitem{Seah2018NJP}S. Seah, S. Nimmrichter, and V. Scarani, Work production
of quantum rotor engines,
\href{https://doi.org/10.1088/1367-2630/aab704}{\textit{New J. Phys.} \textbf{20}, 043045 (2018). }


\bibitem{Niedenzu2019Quantum} W. Niedenzu, M. Huber, and E. Boukobza, Concepts of
work in autonomous quantum heat engines,
\href{https://doi.org/10.22331/q-2019-10-14-195}{\textit{Quantum} \textbf{3}, 195 (2019). }

\bibitem{Seah2021PRL} S. Seah, M. Perarnau-Llobet, G. Haack, N. Brunner, and S.
Nimmrichter, Quantum speed-up in collisional battery charging, 
\href{https://doi.org/10.1103/PhysRevLett.127.100601}{\textit{Phys. Rev. Lett.} \textbf{127}, 100601 (2021).  }




\bibitem{coherentergotropyPRL}
G. Francica,  F. C. Binder,  G.Guarnieri, M. T.   Mitchison,   J. Goold, and F.Plastina, Quantum Coherence and Ergotropy,
\href{https://doi.org/10.1103/PhysRevLett.125.180603}{\textit{Phys. Rev. Lett.} \textbf{125},  180603 (2020).}

\bibitem{BPRE2020} B. \c{C}akmak, Ergotropy from coherences in an open quantum
system, \href{https://doi.org/10.1103/PhysRevE.102.042111}{\textit{Phy. Rev. E }\textbf{102}, 042111 (2020).}


\bibitem{Ju-Yeon2024PRA} Ju-Yeon Gyhm, Dario Rosa, and Dominik \ifmmode \check{S}\else \v{S}\fi{}afr\'anek,  Minimal time required to charge a quantum system,
\href{https://doi.org/10.1103/PhysRevA.109.022607}{\textit{Phys. Rev. A} \textbf{109},  022607 (2024).}


\bibitem{BCP2014}T. Baumgratz, M. Cramer, and M. B. Plenio, Quantifying
Coherence, \href{https://doi.org/10.1103/PhysRevLett.113.140401}{\textit{Phys. Rev. Lett.} \textbf{113},  140401 (2014).}

\bibitem{Rana2017PRA} S. Rana, P. Parashar, A. Winter, and M. Lewenstein, Logarithmic coherence: Operational interpretation of ${\ensuremath{\ell}}_{1}$-norm coherence,
\href{https://link.aps.org/doi/10.1103/PhysRevA.96.052336}{\textit{Phys. Rev. A} \textbf{96}, 052336 (2017).}


\bibitem{Napoli2016PRL}C. Napoli, T. R. Bromley, M. Cianciaruso, M. Piani, N. Johnston,
and G. Adesso, Robustness of Coherence: An Operational and Observable Measure of Quantum Coherence\href{https://doi.org/10.1103/PhysRevLett.116.150502}{\textit{Phys. Rev. Lett.} \textbf{116}, 116, 150502 (2016).}
\bibitem{Piani2016PRA} M. Piani, M. Cianciaruso, T. R. Bromley, C. Napoli, N. Johnston,
and G. Adesso, Robustness of asymmetry and coherence of quantum states ,
\href{https://doi.org/10.1103/PhysRevA.93.042107}{\textit{Phys. Rev. A} \textbf{93},  042107 (2016).}



\bibitem{Nielsenbook}
M. A. Nielsen and I. L. Chuang,\textit{Quantum Computation and Quantum Information}
(Cambridge University Press, Cambridge, England, 2000).

\bibitem{Streltsov2017RMP} Alexander Streltsov, Gerardo Adesso, and Martin B. Plenio, Colloquium: Quantum coherence as a resource,
\href{https://link.aps.org/doi/10.1103/RevModPhys.89.041003}{\textit{Rev. Mod. Phys.} \textbf{89}, 041003 (2017).}
\bibitem{Streltsov2015PRL} Alexander Streltsov, Uttam Singh, Himadri Shekhar Dhar, Manabendra Nath Bera, and Gerardo Adesso, Measuring Quantum Coherence with Entanglement, \href{https://link.aps.org/doi/10.1103/PhysRevLett.115.020403}{\textit{Phys. Rev. Lett.} \textbf{115}, 020403 (2015).}

\bibitem{MullerLennert2013Renyi} M. M\"{u}ller-Lennert, F. Dupuis, O. Szehr, S. Fehr, and M. Tomamichel, On quantum R\'{e}nyi entropies: A new generalization and some properties,
\href{https://doi.org/10.1063/1.4838856}{\textit{J. Math. Phys.} \textbf{54}, 122203 (2013).}
\bibitem{Campaioli2018PRL} F. Campaioli, F. A. Pollock, F. C. Binder, and K. Modi, Tightening Quantum Speed Limits for Almost All States,
\href{https://link.aps.org/doi/10.1103/PhysRevLett.120.060409}{\textit{Phys. Rev. Lett.} \textbf{120}, 060409 (2018).}
\bibitem{PhysRevX6021031} D. P. Pires, M. Cianciaruso, L. C. C\'{e}leri, G. Adesso and D. O. Soares-Pinto ,Generalized geometric quantum speed limits, \href{https://link.aps.org/doi/10.1103/PhysRevX.6.021031}{\textit{Phys. Rev. X} \textbf{6}, 021031 (2016).}
    
\bibitem{MT} L. Mandelstam and I. G. Tamm, The Uncertainty Relation Between Energy and Time in Non-relativistic Quantum Mechanics. In: Bolotovskii, B.M., Frenkel, V.Y., Peierls, R. (eds) SelectedPapers. \href{https://doi.org/10.1007/978-3-642-74626-0_8}{Springer, Berlin, Heidelberg, 1991.}
\bibitem{ML} N. Margolus and L. B. Levitin, The maximum speed of dynamical evolution, \href{https://www.sciencedirect.com/science/article/pii/S0167278998000542}{\textit{Phys. D} \textbf{120}, 188 (1998).}
    

    
\end{thebibliography}
\end{document}